\documentclass{article}
\usepackage{silence}
\usepackage{style/foci}
\usepackage{tgpagella}   
\usepackage{tgheros}     

\usepackage[utf8]{inputenc}
\usepackage[T1]{fontenc}
\usepackage{microtype}

\usepackage{amsmath}
\usepackage{amssymb}
\usepackage{amsfonts}
\usepackage{amsthm}
\usepackage{mathtools}
\usepackage{mathrsfs}
\usepackage{physics}
\usepackage{braket}
\usepackage{slashed}
\usepackage{nicefrac}
\usepackage{textcomp}
\usepackage{dsfont}
\usepackage{bbm}
\usepackage{bm}

\usepackage{graphicx}
\usepackage{subcaption}
\usepackage[export]{adjustbox}
\usepackage{float}
\usepackage{booktabs}
\usepackage{dcolumn}
\newcolumntype{d}[1]{D{.}{.}{#1}}
\usepackage{bigstrut, tabularx, multirow, makecell, diagbox}
\usepackage{colortbl}
\usepackage{tabularray}
\UseTblrLibrary{booktabs}
\usepackage{threeparttable}
\usepackage{tablefootnote}
\usepackage{fontawesome5}

\usepackage{placeins}
\usepackage{caption}
\usepackage{footnote}
\usepackage{enumitem}
\usepackage{multicol}
\usepackage{xspace}
\usepackage{titletoc}
\usepackage{titlesec}
\usepackage[bottom]{footmisc}
\usepackage{setspace}

\usepackage{wrapfig}
\usepackage{tikz}
\usepackage{quantikz}
\usepackage{dashbox}
\usepackage{mdframed}
\usepackage{marvosym}
\usepackage{pifont}
\usepackage{CJK}
\usepackage{url}

\usepackage[table,x11names]{xcolor}
\usepackage[most]{tcolorbox}
\tcbuselibrary{breakable}
\usetikzlibrary{decorations.pathreplacing, fit}

\definecolor{primaryblue}{HTML}{0066CC}
\definecolor{accentcyan}{HTML}{00D4AA}
\definecolor{warmorange}{HTML}{FF6B35}
\definecolor{deepgray}{HTML}{2C3E50}
\definecolor{lightgray}{HTML}{F8F9FA}
\definecolor{gradientstart}{HTML}{667eea}
\definecolor{gradientend}{HTML}{764ba2}

\definecolor{citecolor}{HTML}{0071bc}
\definecolor{citeblue}{RGB}{0, 113, 188}
\definecolor{linkcolor}{HTML}{9A4D92}
\definecolor{firebrick}{rgb}{0.698,0.133,0.133}

\definecolor{paleviolet}{HTML}{E1EEFC}
\definecolor{CarolinaUltraLight}{HTML}{E7F4FC}
\definecolor{lightgrey}{RGB}{247, 247, 247}
\definecolor{shadecolor}{HTML}{EFEFEF}
\definecolor{lightyellow}{rgb}{1.0, 0.95, 0.7}
\definecolor{lightblue}{rgb}{0.90, 0.95, 1.0}
\definecolor{light-gray}{gray}{0.95}

\definecolor{darkgrey}{rgb}{0.5, 0.5, 0.5}
\definecolor{darkgreen}{rgb}{0, 0.5, 0}
\definecolor{mydarkblue}{rgb}{0,0.08,0.45}
\definecolor{mydarkblue2}{rgb}{0.133, 0.133, 0.698}
\definecolor{echodrk}{HTML}{0099cc}
\definecolor{mymauve}{rgb}{0.58,0,0.82}
\definecolor{midnightblue}{rgb}{0.1,0.1,0.44}
\definecolor{oxfordblue}{rgb}{0.0,0.13,0.28}
\definecolor{prussianblue}{rgb}{0.0,0.19,0.33}
\definecolor{coolteal}{rgb}{0, 0.45, 0.45}
\definecolor{olive}{rgb}{0.1, 0.3, 0}
\definecolor{mypurple}{rgb}{0.5,0,0.5}
\definecolor{almond}{rgb}{0.94, 0.87, 0.8}

\definecolor{blue_ampEncoding}{HTML}{DAE8FC}
\definecolor{green_encoder}{HTML}{D5E8D4}
\definecolor{purple_decoder}{HTML}{E1D5E7}
\definecolor{yellow_measure}{HTML}{FFF2CC}
\definecolor{gray_block}{HTML}{F5F5F5}
\definecolor{pink_dru}{HTML}{FAD9D5}
\definecolor{orange_v}{HTML}{FAD7AC}

\definecolor{colorA}{rgb}{1,0,0}
\definecolor{colorB}{rgb}{0,0.3,1}
\definecolor{colorC}{rgb}{0.9,0.8,0.2}
\definecolor{colorD}{rgb}{0,0.65,0}
\definecolor{lesslightgray}{rgb}{0.5,0.5,0.5}
\definecolor{fundamental}{RGB}{55, 110, 111}
\definecolor{Gred}{RGB}{219, 50, 54}
\definecolor{ToCgreen}{RGB}{0, 128, 0}
\definecolor{Sepia}{RGB}{112, 66, 20}
\definecolor{Dblue}{rgb}{0,0.08,0.45}
\definecolor{Blue}{rgb}{0, 0, 0.8}
\definecolor{blue}{rgb}{0,0,1}
\definecolor{UNCblue!10}{rgb}{0.84,0.91,0.98}
\definecolor{RowAlt}{rgb}{0.98,0.98,0.99}

\definecolor{CarolinaBlue}{HTML}{7BAFD4}        
\definecolor{CarolinaLightBlue}{HTML}{B3D4E5}   
\definecolor{CarolinaUltraLight}{HTML}{E8F4F8}  
\definecolor{CarolinaText}{HTML}{1C2B33}        

\usepackage[pagebackref=true,breaklinks=true,colorlinks,hyperfootnotes=false]{hyperref}
\hypersetup{
  colorlinks,
  citecolor=citeblue,
  linkcolor=firebrick,
  urlcolor=firebrick
}
\usepackage[nameinlink,capitalize,noabbrev]{cleveref}

\titlespacing\section{0pt}{4pt plus 4pt minus 2pt}{-2pt plus 2pt minus 2pt}
\titlespacing\subsection{0pt}{2pt plus 4pt minus 2pt}{-2pt plus 2pt minus 2pt}
\titlespacing\subsubsection{0pt}{2pt plus 4pt minus 2pt}{-2pt plus 2pt minus 2pt}

\makeatletter
\def\th@remark{%
  \thm@headfont{\bfseries}%
  \normalfont 
  \thm@preskip\topsep \divide\thm@preskip\tw@
  \thm@postskip\thm@preskip
}
\makeatother

\theoremstyle{definition}

\tcbset{focithm/.style={
  enhanced, breakable, frame hidden,
  colback=black!5,
  width=\linewidth,
  enlarge left by=0pt, enlarge right by=0pt,
  boxsep=6pt, boxrule=0pt,
  left=4pt, right=4pt, top=1pt, bottom=1pt,
  arc=8pt,
  before skip=\topsep, after skip=\topsep
}}

\tcolorboxenvironment{theorem}{focithm}

\tcolorboxenvironment{lemma}{focithm}

\tcolorboxenvironment{corollary}{focithm}

\newtheorem{proposition}{Proposition}[section]
\tcolorboxenvironment{proposition}{focithm}

\tcolorboxenvironment{definition}{focithm}

\newtheorem{remark}{Remark}[section]

\tcolorboxenvironment{assumption}{focithm}

\tcolorboxenvironment{claim}{focithm}

\tcolorboxenvironment{problem}{focithm}

\tcolorboxenvironment{question}{focithm}

\newtcolorbox{titleblock}{
  enhanced,
  frame hidden,
  colback=CarolinaUltraLight,
  colframe=CarolinaUltraLight,
  boxrule=0pt,
  arc=10pt,
  left=14pt,
  right=14pt,
  top=14pt,
  bottom=14pt,
  width=\linewidth,
  before skip=12pt plus 4pt,
  after skip=12pt plus 4pt,
  grow to left by=1.5pt,
  grow to right by=1.5pt,
  before upper={
    \setlength{\parindent}{0cm}
    \setlength{\parskip}{0.5cm}
  }
}

\crefname{theorem}{Theorem}{Theorems}
\crefname{proposition}{Proposition}{Propositions}
\crefname{lemma}{Lemma}{Lemmas}
\crefname{corollary}{Corollary}{Corollaries}
\crefname{definition}{Definition}{Definitions}
\crefname{assumption}{Assumption}{Assumptions}
\crefname{remark}{Remark}{Remarks}
\crefname{problem}{Problem}{Problems}
\crefname{property}{Property}{property}
\crefname{question}{Question}{Questions}

\numberwithin{equation}{section}
\numberwithin{theorem}{section}
\numberwithin{proposition}{section}
\numberwithin{definition}{section}
\numberwithin{lemma}{section}
\numberwithin{assumption}{section}
\numberwithin{remark}{section}

\def\1{\bm{1}}

\makeatletter
\let\save@mathaccent\mathaccent
\newcommand*\if@single[3]{%
    \setbox0\hbox{${\mathaccent"0362{#1}}^H$}%
    \setbox2\hbox{${\mathaccent"0362{\kern0pt#1}}^H$}%
    \ifdim\ht0=\ht2 #3\else #2\fi
}
\newcommand*\rel@kern[1]{\kern#1\dimexpr\macc@kerna}
\newcommand*\widebar[1]{\@ifnextchar^{{\wide@bar{#1}{0}}}{\wide@bar{#1}{1}}}
\newcommand*\wide@bar[2]{\if@single{#1}{\wide@bar@{#1}{#2}{1}}{\wide@bar@{#1}{#2}{2}}}
\newcommand*\wide@bar@[3]{%
    \begingroup
    \def\mathaccent##1##2{%
        \let\mathaccent\save@mathaccent
        \if#32 \let\macc@nucleus\first@char \fi
        \setbox\z@\hbox{$\macc@style{\macc@nucleus}_{}$}%
        \setbox\tw@\hbox{$\macc@style{\macc@nucleus}{}_{}$}%
        \dimen@\wd\tw@
        \advance\dimen@-\wd\z@
        \divide\dimen@ 3
        \@tempdima\wd\tw@
        \advance\@tempdima-\scriptspace
        \divide\@tempdima 10
        \advance\dimen@-\@tempdima
        \ifdim\dimen@>\z@ \dimen@0pt\fi
        \rel@kern{0.6}\kern-\dimen@
        \if#31
        \overline{\rel@kern{-0.6}\kern\dimen@\macc@nucleus\rel@kern{0.4}\kern\dimen@}%
        \advance\dimen@0.4\dimexpr\macc@kerna
        \let\final@kern#2%
        \ifdim\dimen@<\z@ \let\final@kern1\fi
        \if\final@kern1 \kern-\dimen@\fi
        \else
        \overline{\rel@kern{-0.6}\kern\dimen@#1}%
        \fi
    }%
    \macc@depth\@ne
    \let\math@bgroup\@empty \let\math@egroup\macc@set@skewchar
    \mathsurround\z@ \frozen@everymath{\mathgroup\macc@group\relax}%
    \macc@set@skewchar\relax
    \let\mathaccentV\macc@nested@a
    \if#31
    \macc@nested@a\relax111{#1}%
    \else
    \def\gobble@till@marker##1\endmarker{}%
    \futurelet\first@char\gobble@till@marker#1\endmarker
    \ifcat\noexpand\first@char A\else
    \def\first@char{}%
    \fi
    \macc@nested@a\relax111{\first@char}%
    \fi
    \endgroup
    }
\makeatother

\DeclareMathAlphabet{\mathsfit}{\encodingdefault}{\sfdefault}{m}{sl}
\SetMathAlphabet{\mathsfit}{bold}{\encodingdefault}{\sfdefault}{bx}{n}

\let\tilde\widetilde
\let\hat\widehat

\renewcommand{\arraystretch}{1.15}
\makeatletter

\def\foci@authors{}
\def\foci@affil{}
\def\foci@email{}
\def\foci@project{}
\def\foci@date{\today}
\def\foci@authorsep{}

\newcommand{\fociauthor}[2]{%
  \g@addto@macro\foci@authors{\foci@authorsep\mbox{#1\textsuperscript{#2}}\gdef\foci@authorsep{,\ }}}
\newcommand{\fociaffil}[1]{\gdef\foci@affil{#1}}
\newcommand{\fociemail}[1]{\gdef\foci@email{#1}}
\def\foci@corremail{\foci@email}
\newcommand{\focicorremail}[1]{\gdef\foci@corremail{#1}}
\newcommand{\fociproject}[1]{\gdef\foci@project{#1}}
\newcommand{\focidate}[1]{\gdef\foci@date{#1}}
\newcommand{\fociauthorlist}{\gdef\foci@authorsep{}\foci@authors}
\newcommand{\focicorr}{\textrm{\Letter}}

\def\foci@blfootnote{\gdef\@thefnmark{}\@footnotetext}
\newcommand{\focifootnotes}{%
  \foci@blfootnote{
    \textsuperscript{\Letter}Corresponding authors: \texttt{\foci@corremail}}}

\newlength{\foci@mw}
\newcommand{\focidrawmark}[3]{%
  \setlength{\foci@mw}{#2}%
  \begin{scope}[shift={#3}, x={\foci@mw/564.03}, y={-\foci@mw/564.03}]
    \foreach \cx/\cy in {282.02/66.03, 282.02/222.08, 215.77/348.8, 348.73/348.8,
                         414.88/474.93, 149.16/474.93, 67.06/222.08, 496.99/222.08}
      \fill[#1] (\cx,\cy) circle[radius=42.04];
  \end{scope}}
\newcommand{\focimark}[2][black]{%
  \setlength{\foci@mw}{#2}%
  \begin{tikzpicture}[x={\foci@mw/564.03}, y={-\foci@mw/564.03}]
    \useasboundingbox (0,0) rectangle (564.03,540.96);
    \focidrawmark{#1}{#2}{(0,0)}
  \end{tikzpicture}}

\fancypagestyle{focititlepage}{\fancyhf{}\renewcommand{\headrulewidth}{0pt}}

\newcommand{\foci@headerline}{%
  \noindent
  \begin{minipage}[c]{0.5\linewidth}
    \includegraphics[height=0.58cm]{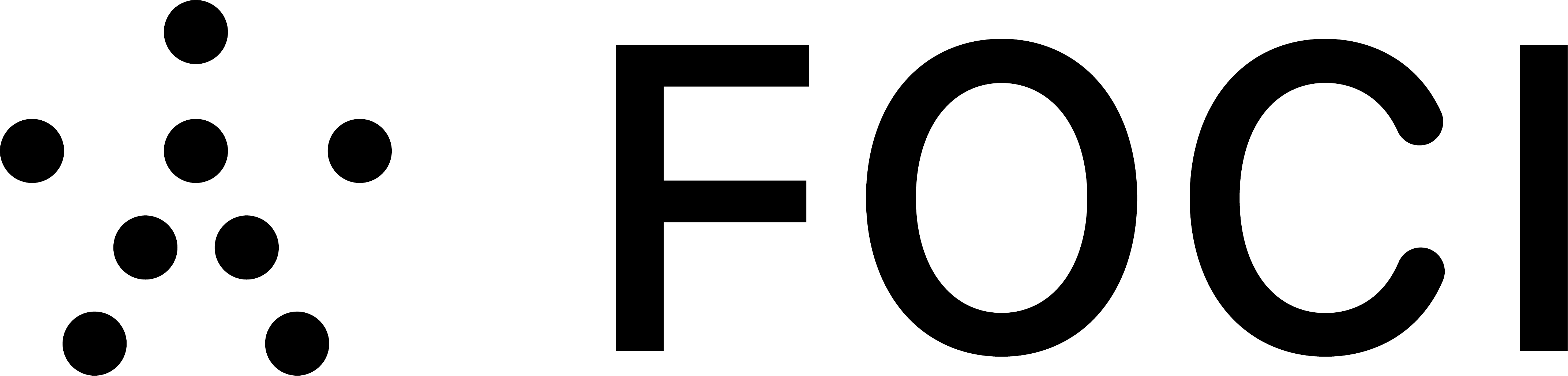}
  \end{minipage}%
  \begin{minipage}[c]{0.5\linewidth}
    \raggedleft\small\itshape\color{black!55}\foci@date
  \end{minipage}\par}

\newcommand{\foci@contactline}{%
  {\footnotesize\color{black!65}%
   \textbf{Email:}~\texttt{\foci@email}\qquad
   \textbf{Project:}~\url{\foci@project}\par}}

\newcommand{\foci@hairline}{\noindent\textcolor{black}{\rule{\linewidth}{0.5pt}}\par}

\newcommand{\focicoverCard}{%
  \thispagestyle{focititlepage}%
  \foci@headerline
  \vskip -0.5em
  \foci@hairline
  \vskip 0.9em
  \begin{tcolorbox}[enhanced, frame hidden, boxrule=0pt, arc=12pt,
      colback=black!4, width=\linewidth,
      left=18pt, right=18pt, top=18pt, bottom=16pt,
      grow to left by=2pt, grow to right by=2pt]
    \noindent
    \begin{minipage}[t]{\linewidth}
      {\setstretch{1.15}\LARGE\bfseries\raggedright Sign-Constrained Intervention Effects for Domain-Generalizable ICU World Models
\unskip\par}
      \vskip 0.55em
      {\normalsize\fociauthorlist\par}
      \vskip 0.2em
      {\small\color{black!70}\foci@affil\par}
    \end{minipage}\par
    \vskip 0.9em
    {\small\setstretch{1.08}Predicting how a patient's vital signs will respond to an intervention is a central question in intensive care. World models can do so by learning dynamics as a function of prior actions, so that a trajectory can be unrolled under any chosen treatment course. However, such models tend to be brittle for dosing practices outside of those seen in training data. Clinicians choose drug dosages based on the patient's state, so the association a model learns between dose and outcome runs opposite to the drug's \textit{effect}, which does not generalize well to out-of-distribution (OOD) dosing practices. However, pharmacological information already provides the effects on vital signs of different drugs, which is invariant to dosing practice distribution shifts. To resolve the OOD sensitivity issues of current ICU world models, we introduce \ourmethod, a world model that isolates an intervention's effect by evaluating a forward pass twice: once under the recorded doses, and once with those doses set to zero. The difference is then projected onto the set of pharmacologically admissible \textit{effect directions} stated by a corpus of 70 (drug, channel, sign) rules over 28 drugs, leaving the magnitude, which no source states, to be learned from data. Across five distribution-shift parameters over cohorts from three intensive-care databases, and against forecasters, counterfactual treatment-effect models, invariance objectives and two weaker placements of the same knowledge, \ourmethod\ attains the lowest OOD RMSE in every setting, improving on the strongest external baseline by 8--12\%, while matching the in-distribution error of an unconstrained backbone. Reversing the pharmacological effect directions degrades accuracy below the unconstrained backbone, and directions estimated from the training records recover almost no performance gain, indicating that the improvement derives from the content of the pharmacological knowledge rather than from the presence of a constraint.
The construction is not tied to intensive care and may extend to other settings where the sign of an effect is known in advance and its magnitude is confounded by how the treatment was assigned.
\unskip\par}
    \vskip 0.9em
    \foci@contactline
  \end{tcolorbox}
  \vskip 0.4em}

\newcommand{\focicover}{\focicoverCard}

\makeatother

\newcommand{\ourmethod}{\textsc{PhysioWorld}\xspace}
\definecolor{tableblue}{HTML}{315A7D}
\definecolor{tableblueLight}{HTML}{EAF2F8}
\definecolor{tableblueVeryLight}{HTML}{F5F8FB}
\definecolor{tablegray}{HTML}{66727C}
\definecolor{tableredLight}{HTML}{FBF0EF}
\newcommand{\ourval}[1]{\textcolor{tableblue}{#1}}
\newcolumntype{Y}{>{\centering\arraybackslash}X}
\begin{document}

\fociauthor{Zhen Xu}{1}
\fociauthor{Nicholas Konz}{1}
\fociauthor{Zhen Tan}{3}
\fociauthor{Zachary Plotkin}{2,\focicorr}
\fociauthor{Tianlong Chen}{1,2,\focicorr}

\fociaffil{\textsuperscript{1}University of North Carolina at Chapel Hill\quad
  \textsuperscript{2}Foci Labs\quad
  \textsuperscript{3}Stevens Institute of Technology}
\fociemail{tianlong@cs.unc.edu}
\focicorremail{zach@fo.ci, tianlong@fo.ci}
\fociproject{https://shahinxu.github.io/physioworld/}

\pagestyle{fancy}
\fancyhf{}
\renewcommand{\headrulewidth}{0.4pt}
\chead{\small\scshape Sign-Constrained Intervention Effects for ICU World Models
\unskip}
\cfoot{\small\thepage}

\focicover
\focifootnotes

\section{Introduction}

In an intensive care unit, clinicians must frequently choose interventions to perform on a patient, such as raising a vasopressor, starting a sedative, or changing a ventilator setting. Each decision rests on an implicit prediction of how the patient's vital signs will respond, and making that prediction explicit is a central computational problem in this setting \citep{hyland2020early, komorowski2018artificial}. The data is two streams on one clock: vital signs sampled by clinical routine as much as by physiology \citep{agniel2018biases, hinrichs2024short, harutyunyan2019multitask}, and sparse, asynchronous clinician actions that mix drug infusions, ventilator set-points and discrete switches. Clinical time-series models handle such data natively \citep{che2018recurrent, rubanova2019latent, nie2022time} but answer only ``what happens next'', whereas the question here is what happens under a treatment course one names, which requires interventions to be explicit inputs whose consequences propagate through the predicted physiology, the defining property of a \emph{world model} \citep{ha2018recurrent, lecun2022path}.

The difficulty is not in fitting such a model, but in applying it to units, eras, or dosing practices unlike those of the records it was trained on, which is where it is used and where every data-driven one fails. That failure is documented: on sepsis cohorts from MIMIC-IV and eICU, \citet{park2024consistent} trained 81 transformer dynamics models and found that replacing the recorded doses at test time with zeros or with a random permutation left prediction error essentially unchanged. What they carry is not a dose-response, and a model that holds none has nothing that a change of regime could preserve.

The cause is in the data rather than the architectures. A clinical record contains only the treatments that were given, and the dose is itself a reading of how sick the patient was when it was set \citep{gottesman2019guidelines, kotani2023updated}. A vasopressor appears in the record because the patient was deteriorating, so a model that reads the dose as a cause rather than as a symptom of severity learns that the drug makes the pressure fall \citep{walker1996confounding, hernan2010causal, ioannidis2001comparison}. That accounts for both symptoms: where dose and severity are collinear the fitted effect collapses toward zero, and where severity dominates it takes the wrong sign. Either way it surfaces out of distribution, since a dose-response learned this way survives only the regime it was fitted in.

The failure concentrates on one axis: every purely data-driven model we test degrades further out of distribution under a shift in treatment intensity than under any shift in the patient population (Section~\ref{sec:main}). A model whose dose-response is absent or reversed has nothing to carry across a change in dosing, while the rest of the physiology it learned survives intact. The \emph{direction} of a drug's effect, by contrast, is invariant across that shift: magnitudes vary with age, organ function and comorbidity, but a vasopressor raises blood pressure across patients and care units, and pharmacology states as much before any data is collected \citep{hall2016guyton}.

\ourmethod\ turns that asymmetry into an architecture. A standard forward pass does not expose the quantity a direction would constrain, since it outputs the next state as a whole, without separating the part caused by any single drug. \ourmethod\ recovers that part by evaluating the same predictor twice with identical weights and history, once with the recorded doses and once with those doses set to zero, and a closed-form, parameter-free projection then maps the difference onto the cone of pharmacologically admissible directions. The operator therefore constrains signs alone and leaves magnitudes to training (Section~\ref{sec:method}). We evaluate \ourmethod\ on five distribution-shift axes across three ICU databases, where it exceeds every baseline out of distribution at the in-distribution error of its backbone, and the gain holds under free-running rollout (Section~\ref{sec:experiments}).

This paper makes three contributions.
\begin{itemize}[nosep,leftmargin=*]
\item We build a shift benchmark on five cohorts from three intensive-care databases, spanning patient population and treatment intensity, and show that clinical dynamics models degrade further under a shift in how much drug is given than under any shift in who the patient is.
\item We close that gap with \ourmethod, a clinical world model that isolates its estimate of an intervention's effect with a second forward pass and projects that estimate onto the stated pharmacological directions, a parameter-free, provably non-expansive operation that reduces out-of-distribution RMSE by up to 12\% over the strongest external baseline.
\item We ablate the injected corpus out of distribution. Inverting the stated directions is worse than injecting none, and data-mined directions recover little of the benefit, so the generalization gain comes from the content of the pharmacology rather than from the act of constraining.
\end{itemize}

\section{Related Work} \label{sec:related}

\textbf{World models.} Action-conditioned latent dynamics were first developed in games and robotics \citep{ha2018recurrent, hafner2019dream, hafner2025mastering}, and joint-embedding predictive architectures (JEPA) later showed that predicting in representation space, without reconstructing the input, is sufficient for the same purpose \citep{lecun2022path, assran2023self, bardes2024revisiting, assran2025v, balestriero2025lejepa, bagatella2026td, chen2026vl}, a recipe our backbone follows. Several world models have recently been proposed for medicine \citep{liu2026medical, yang2026clin, yang2025medical, wang2026chronomedicalworld, mu2026ehrworld, liu2026drift}, but none of the six applies to the present task: four take no intervention input at all, one models an intervention as producing an empty outcome with probability one, and one encodes vasopressor exposure as a binary flag, so a treatment course cannot be specified. None of the six builds pharmacological structure into the architecture, and none is evaluated under distribution shift.

\textbf{Clinical prediction, interventions and shift.} Models for bedside data address irregular sampling directly \citep{che2018recurrent, de2019gru, rubanova2019latent, yalavarthi2024grafiti}, while windowed and pretrained forecasters remain strong on aggregate error \citep{zeng2023transformers, nie2022time, liu2024itransformer, wang2024timexer, ansari2025chronos, li2026mira, gu2024time, pucher2026can, meyer2025rethinking}, and in all of them interventions enter as covariates. A second line estimates counterfactual outcomes under treatment that changes over time, adjusting for time-dependent confounding with balanced representations \citep{bica2020estimating} or attention over treatment history \citep{melnychuk2022causal}; both are baselines here, and both face the confounding described above \citep{walker1996confounding, hernan2010causal, gottesman2019guidelines}, which is why only knowledge that does not require estimating response magnitudes from the record can be injected safely. Robustness to shift is otherwise sought through invariant features \citep{arjovsky2019invariant, sun2016deep, sagawa2019distributionally}, and clinical shift benchmarks stratify patients by demographics, care unit or era \citep{avati2021beds, koh2021wilds, yang2023change}, all of which describe who the patient is.

\textbf{Knowledge-constrained learning.} A common approach relaxes logic or physics constraints into a differentiable penalty \citep{xu2018semantic, raissi2019physics, badreddine2022logic, huang2026knowledge}, which leaves the model free to violate the rules it was penalized with. A second approach enforces them exactly by projecting the output onto a feasible set \citep{amos2017optnet, donti2021dc3, dalal2018safe}. A third builds the constraint into the function class, as in monotonic networks \citep{sill1997monotonic, you2017deep, liu2020certified} and physics-informed architectures \citep{greydanus2019hamiltonian, cranmer2020lagrangian}, and pharmacological priors have entered clinical state-space models in a similar way \citep{hussain2021neural, qian2021integrating}. All of these constrain a predicted \emph{value}, or its derivative with respect to a designated input, and none reaches the quantity at issue: an intervention effect can contradict the direction of a drug's action while every value it produces stays feasible, because monotonicity in a raw input places no constraint on a dose effect that has already propagated through the latent representation, as our Clamp baseline shows. We therefore place the constraint on $\Delta z$ itself, inside the recurrent transition.

\section{Problem Setup} \label{sec:problem}

\begin{wrapfigure}{r}{0.34\textwidth}
    \centering
\includegraphics[width=\linewidth]{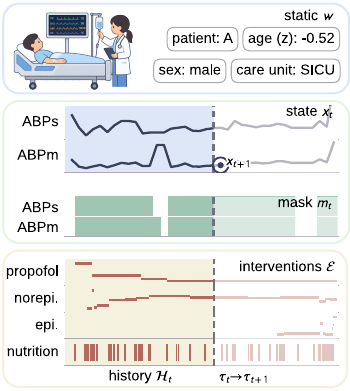}
    \caption{A real ICU stay illustrating the task.}
    \label{fig:problem_setup}
    \vspace{6pt}
\end{wrapfigure}
\textbf{Problem formalization.} A patient stay is a triple of streams on a common clock (Figure~\ref{fig:problem_setup}). The \emph{state stream} is a sequence $\{(\tau_t, x_t, m_t)\}_{t=1}^{T}$ of irregularly spaced times $\tau_1 < \cdots < \tau_T$, last-observed vital-sign vectors $x_t \in \mathbb{R}^{C}$, and freshness masks $m_t \in \{0,1\}^{C}$, where $m_{t,c}=1$ if and only if channel $c$ is newly measured at $\tau_t$. Between measurements $x$ carries the last value forward, so the mask, not the value, encodes what is known. The \emph{intervention stream} is a set of typed events $\mathcal{E} = \{(s_i, k_i, v_i)\}$: administration time $s_i$ in continuous hours, intervention channel $k_i \in \{1,\ldots,A\}$ (drug infusions and boluses, ventilator set-points, discrete switches), and dose or setting $v_i$. The third stream is the static covariate vector $\xi$ (e.g.\ age, sex, care unit), fixed for the stay. The task is one-step state prediction under a specified intervention course: given the history $\mathcal{H}_t$ of all three streams up to $\tau_t$ and the interventions in $(\tau_t, \tau_{t+1}]$, predict $x_{t+1}$ at the next observed time,
\begin{equation}
\hat{x}_{t+1} = f\big(\mathcal{H}_t,\ \mathcal{E} \cap (\tau_t, \tau_{t+1}],\ \tau_{t+1} - \tau_t\big).
\label{eq:task}
\end{equation}
Multi-step prediction is a free-running rollout in which predictions are fed back as states while the future intervention course is held to a specified plan, and because that course is an explicit argument of $f$, separate from the history, substituting an alternative course poses a ``what-if'' query. The gap between consecutive states, $\tau_{t+1}-\tau_t$, is also an input, since the model must predict across whatever horizon the chart imposes.

\textbf{Out-of-distribution evaluation.} The quantity this paper reports is not the error of $f$ on stays like the ones it was fitted on. A partition cuts the cohort into slices $\mathcal{D}_1,\dots,\mathcal{D}_K$ along one axis, such as the intensity of a drug class or the age of the patient; a model fitted on $\mathcal{D}_i$ is then scored on the held-out stays of every $\mathcal{D}_j$. The diagonal $i=j$ is the in-distribution (ID) error, the off-diagonal entries $i \neq j$ are the out-of-distribution error, and it is the latter that every result below refers to.

\textbf{Stream representation.} \label{sec:streams} The intervention stream is a sparse, asynchronous event set, while the network needs a value at every prediction time, so we bridge the two with an \emph{exposure field} $u(s)$, whose entry $u_k(s)$ states how strongly intervention channel $k$ is acting on the patient at continuous time $s$. A first-order elimination kernel, $u_k(s) = \sum_i v_i e^{-(s-s_i)/\theta_k}$, is the usual choice \citep{hall2016guyton}, but it writes a standard pharmacological model into a representation that this paper otherwise keeps free of pharmacological assumptions, so that any directional structure the model later exhibits can be attributed to the injected rule corpus rather than to this encoding. We therefore instantiate $u$ as a bucketed dose history, summing the dose of each channel within five fixed elapsed-time windows whose boundaries encode a coarse timescale, but the same one for every channel, so that no drug receives a fitted constant. Infusions, boluses and ventilator set-points enter this representation differently, and the state encoder additionally receives a per-channel freshness feature computed from the time since that channel was last measured; Appendix~\ref{app:streams} gives the window boundaries, the treatment of each intervention type, and the freshness definition.

\section{PhysioWorld} \label{sec:method}

We now introduce \ourmethod, which separates what a forward pass predicts from what pharmacology is able to constrain. The separation is what buys the out-of-distribution behavior: the rules the operator enforces are fixed before any record is read, so they are the same under every dose regime, while everything estimated from data is left where the shift can reach it. Three objects are fixed before training begins: the rule corpus, the sign cone it defines at each step, and the projection onto that cone, which has no parameters. Two are learned: the backbone that produces the intervention effect, and the calibration map $G$ in whose rows that effect is read, whose directions an anchor term pins to the channels so that the signs cannot move with the dose regime.

\begin{figure}[t]
\centering
\includegraphics[width=\textwidth]{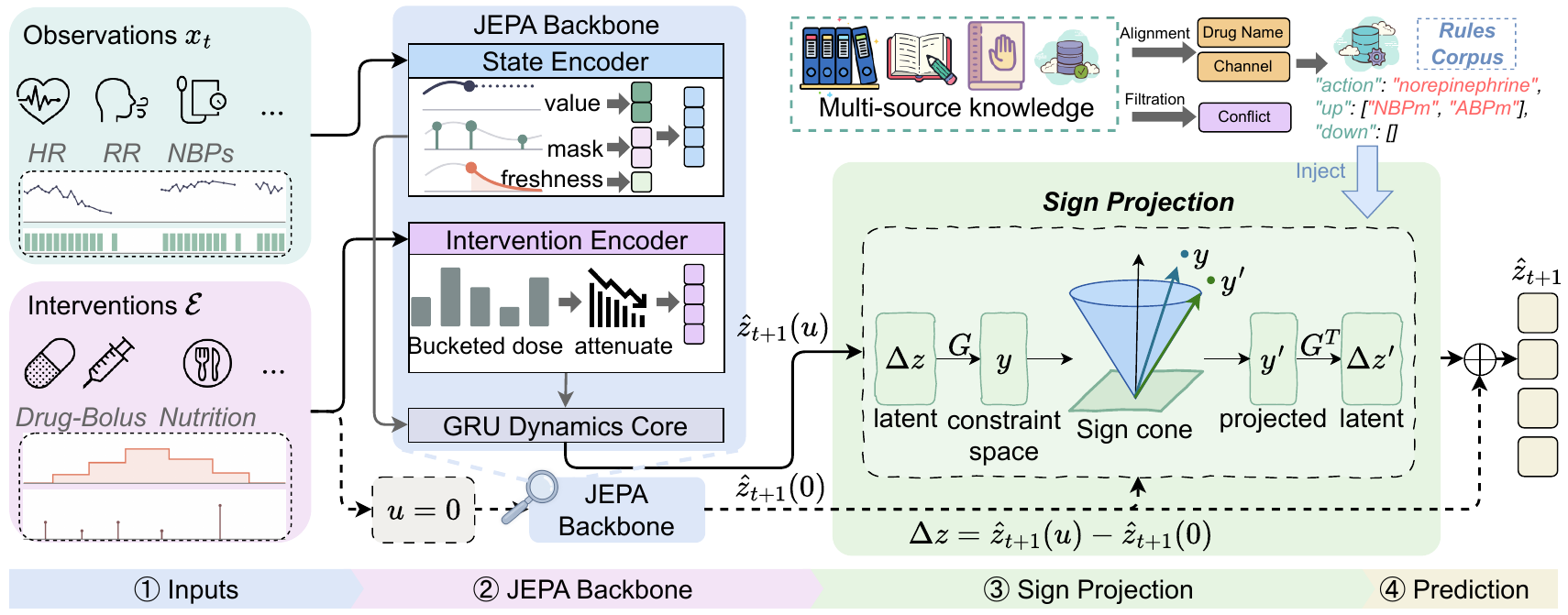}
\caption{\ourmethod's forward pass in four stages. (1) States $x_t$ and interventions $\mathcal{E}$ are the inputs. (2) The state and intervention encoders feed a GRU core that runs once under the recorded doses $u$ and once with $u=0$, giving $\hat{z}_{t+1}(u)$ and $\hat{z}_{t+1}(0)$. (3) Their difference $\Delta z$ is mapped by $G$ into channel units $y$ and projected onto the sign cone to give $\Pi(y)$. (4) The correction $G^\top(\Pi(y) - y)$ is added back into the latent to give $\hat{z}_{t+1}$.}
\label{fig:world_model}
\end{figure}

\textbf{Rule corpus.} \label{sec:rules} We inject a corpus of \emph{intervention rules} that state, for each of 28 drugs, the sign/direction of the drug's effect on each channel it touches, 70 (intervention, channel) pairs in all. We keep the sign and omit the magnitude, because dose is titrated against the quantity it changes: a vasopressor is increased as pressure falls and withdrawn as it recovers, so the observed association between dose and effect runs opposite to the drug's action, and the magnitude of the effect cannot be reliably recovered from observational data \citep{walker1996confounding}. The direction, by contrast, is a statement about pharmacological mechanism and is unaffected by this confounding variable. Constraints on the state itself, such as per-channel physiological ranges and the ordering $\text{systolic} \ge \text{mean} \ge \text{diastolic}$, bound which values are possible, but we do not use them, because the failure we target lies in how a prediction \emph{reacts} to an intervention, and that reaction can be wrong while every value stays inside the feasible set. Appendix~\ref{app:rules} lists the corpus and its coverage by channel.

\textbf{A second forward pass.} \label{sec:decomposition} A rule about a drug constrains what that drug does, but a standard forward pass does not isolate that quantity: it reports the next latent with the intervention's contribution and the patient's own course already summed. We recover that contribution by holding the weights and the recorded observations fixed and advancing the same predictor twice along the stay (Figure~\ref{fig:world_model}), once under the recorded dose entries $u$ and once with the dose entry of every covered drug zeroed at every step,
\begin{equation}
\Delta z_{t+1} \;=\; \hat{z}_{t+1}(u) \;-\; \hat{z}_{t+1}(0).
\label{eq:delta}
\end{equation}
Both passes read the same states, so the encodings $z_{1:t}$ are identical and none of the difference originates in the observed course; only the dose entries of the intervention embedding $a_s$, $s \le t$, of Section~\ref{sec:backbone} differ, and the non-dose intervention channels are untouched. The recurrent states separate at the first administered dose and stay apart, since the intervention encoding enters the recurrence, so $\Delta z_{t+1}$ is the contribution of the dose history as a whole rather than of the dose delivered in $(\tau_t, \tau_{t+1}]$ alone. That contribution is dominated by the recent past, a trailing window of $w = 3$ steps recovering it to within $1\%$ (Appendix~\ref{app:window}), which is what licenses the cone of Section~\ref{sec:operator} to be built from the drugs active at $t$. The two passes advance as parallel recurrent states in one sweep, at the cost of one extra forward pass. Appendix~\ref{app:design} discusses withdrawing the non-dose channels as well, and severing the intervention stream from the recurrent core so that $u=0$ would return the untreated course exactly.

\textbf{The sign projection.} \label{sec:operator} Let $\mathcal{K}_t$ be the corpus-covered (drug, channel) pairs whose drug is active at step $t$. Where every pair in $\mathcal{K}_t$ touching a channel $c$ agrees on the sign of its effect, write $\sigma_c$ for that common sign; where they disagree, as when a vasopressor and a sedative run together, no sign is implied and none is imposed. In practice, conflict is infrequent: pooled over all channels, it affects $2.3\%$ of scored entries and $4.1\%$ of those under treatment (Appendix~\ref{app:rules}, Table~\ref{tab:channel_rules}). Write $y = G\,\Delta z_{t+1}$ for the effect in channel units. $G$ is a $C \times d$ calibration map with orthonormal rows ($C \le d$) carrying the $d$-dimensional latents of Section~\ref{sec:backbone} into the $C$ channels, trained by an anchor term alone and separate from the readout head: $\mathcal{L}_{\mathrm{anchor}} = |\mathcal{M}|^{-1}\sum_{(t,c) \in \mathcal{M}} ((G z_t)_c - \tilde{x}_{t,c})^2$, over the charted entries $\mathcal{M}$ only, with $\tilde{x}$ the standardized channel values. Row $c$ is thus fit to read channel $c$, which fixes its sign rather than leaving it free: the flipped row predicts $-\tilde{x}_c$ and the term rejects it. Under the default training, no scored row of $G$ reverses across seeds and initializations (Appendix~\ref{app:gcal}). Because $G$ is linear, an anchor fit on levels transfers to differences without further assumptions, $G(z_1 - z_2) = G z_1 - G z_2$, so $(G \Delta z)_c$ reads the same channel that $(G z)_c$ does. The operator then projects onto the resulting cone and returns the correction to the latent,
\begin{equation}
\Pi(y)_c \;=\;
\begin{cases}
\sigma_c \max(\sigma_c y_c,\, 0), & \sigma_c \text{ defined},\\[2pt]
y_c, & \text{otherwise},
\end{cases}
\qquad
\hat{z}_{t+1} \;=\; \hat{z}_{t+1}(0) + \Delta z_{t+1} + G^{\top}\!\left(\Pi(y) - y\right).
\label{eq:operator}
\end{equation}
Write $\Delta z'_{t+1} = \Delta z_{t+1} + G^{\top}(\Pi(y) - y)$ for the corrected effect, so that $\hat{z}_{t+1} = \hat{z}_{t+1}(0) + \Delta z'_{t+1}$. The corrected $\hat{z}_{t+1}$ is the state advanced into step $t{+}1$, so the correction enters the recurrence rather than acting only at the readout; this is what separates the operator from filtering the decoded output. Algorithm~\ref{alg:step} states the per-step procedure in full.
\begin{proposition}[Properties of the sign projection]
\label{prop:nonexpansive}
Let $\mathcal{S} = \{\, y \in \mathbb{R}^{C} : \sigma_c y_c \ge 0 \text{ for every } c \text{ on which } \sigma_c \text{ is defined} \,\}$, and let $\Pi$ be as in \eqref{eq:operator}. Then $\mathcal{S}$ is a nonempty closed convex cone, $\Pi$ is the Euclidean projection onto it, and for every $y \in \mathbb{R}^{C}$:
\begin{enumerate}[nosep,label=(\roman*),leftmargin=*]
\item $\Pi(y) \in \mathcal{S}$;
\item $\Pi(y) = y$ whenever $y \in \mathcal{S}$;
\item $\|\Pi(y) - \Pi(y')\| \le \|y - y'\|$ for every $y'$;
\item for every rule-consistent $y^{\star} \in \mathcal{S}$,
\[
\|y - y^{\star}\|^{2} - \|\Pi(y) - y^{\star}\|^{2}
\;\ge\; \|y - \Pi(y)\|^{2}.
\]
\end{enumerate}
\end{proposition}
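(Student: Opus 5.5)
The plan is to reduce everything to one-dimensional statements, using the fact that both $\mathcal{S}$ and $\Pi$ split coordinate by coordinate. Let $D$ be the set of channels on which $\sigma_c$ is defined, with $\sigma_c\in\{-1,+1\}$. Then $\mathcal{S}=\prod_{c} I_c$, where
\[
I_c=\{t\in\mathbb{R}:\sigma_c t\ge 0\}\ \text{ for } c\in D, \qquad I_c=\mathbb{R}\ \text{ for } c\notin D .
\]
Each $I_c$ is a closed half-line or the whole line, so it is a nonempty closed convex cone. A finite product of such sets has the same properties; for instance $0\in\mathcal{S}$, and $\mathcal{S}$ is closed under nonnegative scaling and under addition.

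Because the squared Euclidean norm separates, minimizing $\|y-s\|^2$ over $s\in\mathcal{S}$ amounts to minimizing each $(y_c-s_c)^2$ over $s_c\in I_c$ independently. For $c\notin D$ the minimizer is $y_c$. For $c\in D$, take $\sigma_c=+1$: the projection of $y_c$ onto $[0,\infty)$ is $\max(y_c,0)$. For $\sigma_c=-1$ the projection onto $(-\infty,0]$ is $\min(y_c,0)=-\max(-y_c,0)$. Both cases equal $\sigma_c\max(\sigma_c y_c,0)$, and this is exactly \eqref{eq:operator}. So $\Pi=P_{\mathcal{S}}$, and the minimizer is unique by strict convexity.

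The numbered properties then follow from standard facts about projections onto nonempty closed convex sets. Items (i) and (ii) hold by definition of the projection. Items (iii) and (iv) both come from the variational inequality
\[
\langle y-\Pi(y),\, s-\Pi(y)\rangle\le 0 \qquad\text{for all } s\in\mathcal{S}.
\]
I will either cite this or verify it coordinatewise. In coordinates with $c\in D$ and $\sigma_c=+1$:
\begin{itemize}
\item if $y_c\ge 0$, the $c$-th term is zero;
\item if $y_c<0$, then $\Pi(y)_c=0$ and the term is $y_c s_c\le 0$.
\end{itemize}
The case $\sigma_c=-1$ is symmetric, and coordinates $c\notin D$ contribute zero.

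For (iv), write $y-y^\star=(y-\Pi(y))+(\Pi(y)-y^\star)$ and expand:
\[
\|y-y^\star\|^2=\|y-\Pi(y)\|^2+\|\Pi(y)-y^\star\|^2-2\langle y-\Pi(y),\,y^\star-\Pi(y)\rangle .
\]
The inner-product term is $\le 0$ by the variational inequality with $s=y^\star$, which gives the claim. For (iii), apply the inequality twice, once with $(y,s=\Pi(y'))$ and once with $(y',s=\Pi(y))$, and add. This gives firm nonexpansiveness,
\[
\|\Pi(y)-\Pi(y')\|^2\le\langle y-y',\,\Pi(y)-\Pi(y')\rangle,
\]
and Cauchy--Schwarz finishes. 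A more direct route also works: the scalar map $t\mapsto\sigma\max(\sigma t,0)$ is $1$-Lipschitz, so each coordinate contracts and summing squares gives (iii).

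No step is a real obstacle. The only point needing care is confirming that the two sign cases of the scalar formula are the projections onto the correct half-lines, and keeping the undefined-sign coordinates, where $\Pi$ is the identity, in every sum.
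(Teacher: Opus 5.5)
Your proposal is correct and follows the paper's proof essentially step for step: the same product-of-half-lines decomposition, the same coordinatewise identification of $\Pi$ with the Euclidean projection, and the same variational-inequality expansion for (iv). The only difference is in (iii). The paper uses your ``more direct route'' (each coordinate map is $1$-Lipschitz, then sum squares), while your primary firm-nonexpansiveness argument is equally valid and gives the slightly stronger bound $\|\Pi(y)-\Pi(y')\|^2\le\langle y-y',\,\Pi(y)-\Pi(y')\rangle$.
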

All four properties are elementary, and the proof is in Appendix~\ref{app:proj}.

\begin{remark}[Scale]
\label{rem:scale}
$\mathcal{S}$ is coordinatewise, so rescaling row $c$ of $G$ by any $\rho_c > 0$ leaves unchanged which components $\Pi$ clips and no per-cohort calibration of scale is needed. Orthonormal rows fix the size of the write-back as well as its direction, and $G G^{\top} = I_C$ makes $G\,\Delta z'_{t+1} = \Pi(y)$ exact, so the corrected latent lands in the cone rather than near it.
\end{remark}

Since $G$ trains with the projection active, a $G$ on which the constraint never binds would reduce the operator to the identity at no cost in accuracy; the projection therefore reads a detached $G$, which removes rotating $G$ as a descent direction. The remaining route, collapsing $\Delta z$, is measured rather than forbidden: the projection moves the response on $54$ to $61\%$ of the scored entries on which a sign is defined, and Table~\ref{tab:gcal} varies the initialization, the anchor weight, the orthonormality of the rows and the fitting protocol. The guarantee is also limited: it fixes signs but not magnitudes, it imposes nothing where active drugs disagree or where the corpus is silent, and it holds for $G\,\Delta z$ and not for the decoded prediction, which is why Section~\ref{sec:dose} measures agreement on the decoded output. Convexity is a further reason the rule corpus is stated as signs, since the admissible set of a conditional rule is a union of half-spaces, on which the nearest point need not be unique and non-expansiveness is lost.

\textbf{JEPA backbone and training.} \label{sec:backbone} The backbone is a joint-embedding predictive architecture over the represented streams, in which an encoder maps each step's state $[x_t, m_t, \phi_t]$ to a latent $z_t$ and a recurrent core consumes $[z_t, a_t, \delta_t]$. Here $a_t$ embeds the bucketed intervention history introduced in the stream representation, and $\delta_t$ contains the gap behind step $t$ and the gap ahead to the target, each on a log and a bounded scale, together with a continuous-time encoding of $\tau_t$, and a predictor then maps the recurrent latent and $a_t$ to $\hat{z}_{t+1}$. Training follows LeJEPA \citep{balestriero2025lejepa}: the target is the same online encoder's encoding of the true next state, with no EMA teacher and no stop-gradient, and collapse is prevented by a distributional regularizer that pushes the embedding distribution toward an isotropic Gaussian, giving $\mathcal{L}_1 = (1-\lambda)\,\|\hat{z}_{t+1} - z_{t+1}\|^2 + \lambda\, \mathcal{L}_{\mathrm{SIGReg}}$. Training has two stages: the first trains the encoder, predictor and intervention pathway without any physical-space loss, and the second freezes the backbone and trains only a readout, so that no physical-space gradient reaches the representation and decoded accuracy can be attributed to what the representation encodes \citep{bardes2024revisiting}. The full objective adds the anchor term of Section~\ref{sec:operator}, $\mathcal{L} = \mathcal{L}_1 + \lambda_{\mathrm{anchor}} \mathcal{L}_{\mathrm{anchor}}$ with $\lambda_{\mathrm{anchor}} = 1$, and carries it through both stages: $G$ is the map the cone is defined on, so it is anchored while the representation forms and again while the readout trains. Inside the operator $G$ enters under stop-gradient, so $\mathcal{L}_1$ trains the backbone through the corrected latent but cannot rotate $G$ to make the projection vacuous; $G$ is updated by $\mathcal{L}_{\mathrm{anchor}}$ alone. The readout decodes a delta on the carry-forward anchor, $\hat{x}_{t+1} = x_t + \mathrm{MLP}(\hat{z}_{t+1})$, which is the same output parameterization every baseline uses.

\section{Experiments} \label{sec:experiments}

Section~\ref{sec:main} asks which kind of distribution shift damages these models and whether the pharmacology repairs it, Section~\ref{sec:rollout} whether the damage compounds when a model runs on its own predictions, Section~\ref{sec:dose} whether the dose is used in the direction pharmacology states, and Section~\ref{sec:corpus} where the gain comes from, by varying the injected corpus and by removing one component at a time.

\subsection{Setup} \label{sec:setup}

\textbf{Datasets.} We evaluate five cohorts from three databases, selected on the primary discharge diagnosis so that every stay belongs to exactly one. \textbf{MIMIC-IV} \citep{johnson2023mimic} covers 2008 to 2019 under MetaVision charting and contributes circulatory (20{,}128 stays) and injury (7{,}746) cohorts; \textbf{MIMIC-III} \citep{johnson2016mimic}, restricted to its CareVue era of 2001 to 2008, contributes 10{,}020 and 4{,}480 stays for the same two chapters and is independent of MIMIC-IV in charting system, period and patients; \textbf{VitalDB} \citep{lee2022vitaldb} adds 5{,}391 surgical cases from a tertiary center in Seoul, gridded at five minutes rather than one hour. The same corpus is applied to all five without modification, which tests whether a stated direction can be \emph{reused as written}.

\textbf{Task.} The task is the one-step prediction of Section~\ref{sec:problem} for the nine vital-sign channels defined in Appendix~\ref{app:rules}. The intervention stream covers drugs, blood products and ventilator set-points, with 35 channels on MIMIC-IV, 30 on MIMIC-III and 12 on VitalDB, and predictions are scored only on freshly measured channels whose carry-forward anchor was itself measured.

\textbf{Shift axes.} Axes are built by a \emph{split-then-slice} protocol: following \citet{gulrajani2020search}, the cohort is partitioned once by a global 80/20 subject-wise split, each slice is the intersection of a stratum with that split, and training sets are downsampled to the smallest slice so that data volume does not confound the comparison. \emph{Population axes} stratify patients by characteristics fixed on admission; \emph{intervention-intensity axes} band exposed stays by within-class dose rank, taking the bottom and top quartiles of burden, which assumes no equipotency across drugs. There are five: age, disease subtype and era, cut into 5, 4 and 4 slices, and vasopressor and sedative. VitalDB has too few exposed stays for the vasopressor axis, so cross-database comparison uses the sedative axis alone (Appendix~\ref{app:holdouts}). Test slices differ in intrinsic difficulty \citep{koh2021wilds}, so carry-forward persistence, which scores zero skill by definition, is the reference for the ID column.

\textbf{Baselines.} All methods share the same grid and carry-forward anchor \citep{zeng2023transformers, qiu2024tfb}. The forecasters are PatchTST \citep{nie2022time}, GraFITi \citep{yalavarthi2024grafiti} for irregular sampling, and the pretrained Chronos-2 \citep{ansari2025chronos}, all three receiving the planned intervention course as a known-future covariate. CRN \citep{bica2020estimating} and the Causal Transformer \citep{melnychuk2022causal} forecast under counterfactual time-varying interventions, the setting closest to ours, while IPTW reweights training stays by inverse treatment probability and IRM \citep{arjovsky2019invariant} and CORAL \citep{sun2016deep} seek features invariant across training slices. JEPA, Soft constraint and Clamp share our backbone and differ only in where the constraint acts: \emph{nowhere, in the loss, or on the output}.

\textbf{Metrics.} We report RMSE in physical units, in and out of distribution, as an unweighted macro mean over channels \citep{zhang2024irregular}. A \emph{skill score} against persistence, $1-\mathrm{RMSE}/\mathrm{RMSE}_{\text{pers}}$, is computed within each slice pair and aggregated geometrically \citep{hamill2006measuring, davydenko2013measuring}; it is the primary metric across databases, whose sampling rates make absolute RMSE incomparable. We also report \emph{dispersion}, the ratio of the across-patient standard deviation of a model's predictions to that of the truth, where values below one indicate collapse toward the population mean. All numbers are mean and standard deviation over five seeds. Differences against Clamp, the strongest method sharing our backbone, are tested with a patient-level paired bootstrap and a one-sided Wilcoxon signed-rank test over train/test slice pairs (Appendix~\ref{app:stats}).

\subsection{Shift axes: population against intervention intensity} \label{sec:main}

\begin{table}[t]
\caption{In-distribution and out-of-distribution RMSE ($\downarrow$) on MIMIC-IV, physical units, mean $\pm$ std over five seeds. The ID column is a model trained on the full training split and tested on the full test split. Each OOD entry averages the off-diagonal of that axis's train-slice $\times$ test-slice matrix. \textbf{Bold}: best per shift axis. \underline{Underline}: best external baseline. The ID column is unmarked by design.}
\label{tab:main}
\begin{center}
\scriptsize
\setlength{\tabcolsep}{2pt}
\renewcommand{\arraystretch}{0.95}
\begin{tabularx}{\textwidth}{@{}l *{6}{Y}@{}}
\toprule
& 
& \multicolumn{3}{c}{\textbf{Population shift}}
& \multicolumn{2}{c}{\textbf{Intervention-intensity shift}} \\
\cmidrule(lr){3-5}
\cmidrule(l){6-7}
Method
& In-dist
& Age
& Disease
& Era
& Vasopressor
& Sedative \\
\midrule
\multicolumn{7}{@{}l}{
    \textcolor{tablegray}{\emph{Reference and data-driven forecasters}}
} \\
Persistence
& 10.57
& 10.86
& 11.07
& 10.86
& 10.83
& 10.91 \\
PatchTST
& $8.93\pm0.05$
& $9.68\pm0.07$
& $9.18\pm0.07$
& $8.92\pm0.06$
& $12.15\pm0.08$
& $11.60\pm0.08$ \\
GraFITi
& $8.95\pm0.05$
& $9.62\pm0.07$
& $9.21\pm0.07$
& $8.95\pm0.06$
& $11.85\pm0.08$
& $11.42\pm0.07$ \\
Chronos-2
& $8.88\pm0.05$
& \underline{$9.55\pm0.06$}
& \underline{$9.12\pm0.06$}
& \underline{$8.86\pm0.06$}
& $11.70\pm0.08$
& $11.35\pm0.07$ \\
CRN
& $9.34\pm0.06$
& $9.85\pm0.07$
& $9.45\pm0.07$
& $9.20\pm0.06$
& $10.95\pm0.08$
& $10.68\pm0.07$ \\
Causal Transformer
& $9.08\pm0.05$
& $9.72\pm0.06$
& $9.30\pm0.06$
& $9.05\pm0.06$
& \underline{$10.45\pm0.07$}
& \underline{$10.28\pm0.07$} \\
\midrule
\multicolumn{7}{@{}l}{
    \textcolor{tablegray}{\emph{Robustness by reweighting or invariance}}
} \\
IPTW
& $8.12\pm0.06$
& $9.65\pm0.07$
& $10.55\pm0.08$
& $9.40\pm0.07$
& $10.55\pm0.07$
& $10.38\pm0.07$ \\
IRM
& $9.31\pm0.06$
& $10.75\pm0.09$
& $11.12\pm0.09$
& $10.65\pm0.08$
& $12.35\pm0.10$
& $12.10\pm0.09$ \\
CORAL
& $10.21\pm0.07$
& $10.35\pm0.08$
& $10.48\pm0.08$
& $10.15\pm0.07$
& $11.20\pm0.08$
& $10.95\pm0.08$ \\
\midrule
\multicolumn{7}{@{}l}{
    \textcolor{tablegray}{\emph{World models}}
} \\
JEPA
& $7.95\pm0.04$
& $9.34\pm0.08$
& $10.68\pm0.08$
& $9.46\pm0.07$
& $11.15\pm0.08$
& $11.60\pm0.08$ \\
Soft constraint
& $7.90\pm0.04$
& $9.25\pm0.06$
& $8.85\pm0.06$
& $8.62\pm0.05$
& $11.10\pm0.07$
& $10.75\pm0.07$ \\
Clamp
& $7.92\pm0.04$
& $8.55\pm0.05$
& $8.40\pm0.05$
& $8.20\pm0.04$
& $9.65\pm0.06$
& $9.38\pm0.06$ \\
\rowcolor{tableblueLight}
\textbf{\textcolor{tableblue}{\ourmethod}}
& $7.95\pm0.04$
& $\mathbf{8.41\pm0.05}$
& $\mathbf{8.26\pm0.05}$
& $\mathbf{8.15\pm0.04}$
& $\mathbf{9.42\pm0.06}$
& $\mathbf{9.21\pm0.06}$ \\
\bottomrule
\end{tabularx}
\end{center}
\end{table}

\begin{figure}[t]
\centering
\includegraphics[width=\textwidth]{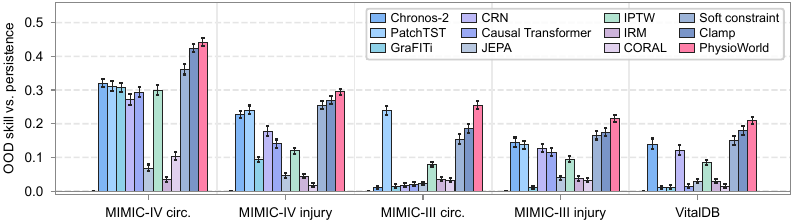}
\caption{Cross-database generalization under sedative intervention-intensity shift. Skill score is the fractional RMSE reduction over carry-forward persistence; higher is better, $0$ means no gain.}
\label{fig:cross_dataset}
\end{figure}

\textbf{Population shift.} Across \emph{Age}, \emph{Disease} and \emph{Era}, most models degrade moderately, which is consistent with demographic shift acting mainly on the covariate distribution $P(X)$ while the physiological transition dynamics stay largely invariant. Model capacity appears sufficient here, and the margin of \ourmethod\ over Clamp, the strongest baseline, is \emph{consistent but modest}.

\textbf{Intervention-intensity shift.} Under \emph{Vasopressor} and \emph{Sedative} shift the forecasters separate into two groups: \textbf{pure sequence models fall \emph{below} persistence} (PatchTST $12.15$, persistence $10.83$), while methods that model the intervention explicitly remain competitive with it (Causal Transformer, $10.45$). As explained in the Introduction, clinical dosing is confounded by indication, so a model fit to observational association reads dose as a marker of severity, and that the two groups diverge under the same shift suggests that \emph{where} the intervention enters the model matters more than model capacity. Domain-invariant objectives do not close the gap and fail in opposite ways (Section~\ref{sec:rollout}).

\textbf{How the knowledge is injected.} A soft monotonicity penalty appears to be dominated by the reconstruction gradients and does not bind out of distribution ($11.10$), and clamping is stronger ($9.65$) but acts \emph{only on the emitted prediction}, leaving unchanged the latent representation that produced the violation. \ourmethod, which constrains the signed dose--effect pathway \emph{at every recurrent transition}, is best on all five axes, with in-distribution error equal to its backbone's and a smallest margin of $0.05$ over Clamp on \emph{Era}. Figure~\ref{fig:cross_dataset} extends the comparison to five cohorts under sedative shift, where the gain is small but present on every cohort: every data-driven forecaster drops to within $0.02$ of persistence on at least one cohort, and which cohort it is differs by method. \ourmethod\ stays at or above $\ourval{0.21}$ on all five cohorts and exceeds the best baseline on each; its margin over Clamp is positive everywhere but within about two seed standard deviations on four of the five, and the bootstrap interval of the difference excludes zero on four of the five axes (Appendix~\ref{app:stats}). We therefore claim \emph{a consistent ordering, not a win on any individual cohort}.

\subsection{Horizon analysis under free-running rollout} \label{sec:rollout}

\begin{figure*}[t]
    \centering
    \begin{minipage}[b]{0.30\textwidth}
        \centering
        \scriptsize
        \setlength{\tabcolsep}{1.2pt}
        \renewcommand{\arraystretch}{0.95}
        \begin{tabular}{@{}lcc@{}}
        \toprule
        Method & Slope $\downarrow$ & Disp. ($\to 1$) \\
        \midrule
        Persistence        & $+0.084$ & $0.985$ \\
        PatchTST           & $+0.127$ & $0.631$ \\
        GraFITi            & $+0.116$ & $0.700$ \\
        Chronos-2          & $+0.124$ & $0.645$ \\
        CRN                & $+0.108$ & $0.758$ \\
        Causal Transformer & $+0.115$ & $0.663$ \\
        \addlinespace[1.5pt]
        IPTW               & $+0.296$ & $1.825$ \\
        IRM                & $+0.472$ & $1.757$ \\
        CORAL              & $+0.102$ & $0.415$ \\
        \addlinespace[1.5pt]
        JEPA               & $+0.150$ & $0.748$ \\
        Soft constraint    & $+0.141$ & $0.765$ \\
        Clamp              & $+0.116$ & $0.792$ \\
        \rowcolor{tableblueLight}
        \textbf{\textcolor{tableblue}{\ourmethod}}
                           & \textbf{+0.090} & \textbf{0.842} \\
        \bottomrule
        \end{tabular}
        \vspace{2pt}
        \captionof{table}{Rollout endpoints on MIMIC-IV.}
        \label{tab:rollout}
    \end{minipage}
    \hfill
    \begin{minipage}[b]{0.68\textwidth}
        \centering
        \includegraphics[width=\linewidth]{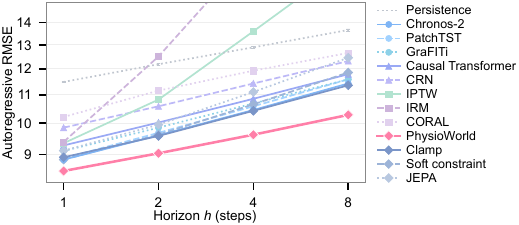}
        \captionof{figure}{Free-running autoregressive rollout RMSE over an eight-step horizon on MIMIC-IV.}
        \label{fig:rollout}
    \end{minipage}
\end{figure*}

After a half-stay warmup, each model's predictions are fed back as its own inputs for eight steps, while the recorded doses continue to be supplied. Because cohort and split are held fixed, the only departure from the training distribution is \emph{the model's own error}, which becomes its input and accumulates with the horizon, so single-step errors are not comparable to the ID column of Table~\ref{tab:main}. Figure~\ref{fig:rollout} and Table~\ref{tab:rollout} show the error and its growth, with every horizon in Appendix~\ref{app:rollout}. \ourmethod\ has the lowest error at every horizon, from $\ourval{8.52}$ at $h{=}1$ to $\ourval{10.28}$ at $h{=}8$, and the flattest growth among learned models, with a slope of $\ourval{+0.090}$ against $+0.127$ for PatchTST. Chronos-2 and PatchTST are the strongest external baselines at $h{=}1$ but \emph{compound fastest} among the forecasters and lose roughly a third of their skill over persistence across the rollout, and clamping loses a quarter, whereas \ourmethod\ keeps nearly all of its skill ($\ourval{0.258}$ at $h{=}1$, $\ourval{0.247}$ at $h{=}8$). CORAL's flat slope ($+0.102$) reflects collapse toward the population mean (dispersion $0.415$) rather than robustness, while IRM and IPTW fail in the opposite direction ($1.757$ and $1.825$) and are the only methods whose error diverges. \ourmethod, by contrast, has the dispersion closest to one of any learned model ($\ourval{0.842}$), indicating its low error is \emph{not obtained by purely predicting the average patient}.

\subsection{Dose response under perturbation} \label{sec:dose}

\begin{wraptable}{r}{0.32\textwidth}
\scriptsize
\centering
\setlength{\tabcolsep}{2.5pt}
\renewcommand{\arraystretch}{0.95}
\begin{tabular}{@{}lcccc@{}}
\toprule
& \multicolumn{2}{c}{\textbf{Zeroed}} & \multicolumn{2}{c}{\textbf{Permuted}} \\
\cmidrule(lr){2-3} \cmidrule(l){4-5}
Method & Sens. & Agr.\,\% & Sens. & Agr.\,\% \\
\midrule
Persistence        & $0$    & n/a  & $0$    & n/a  \\
PatchTST           & $1.10$ & $36$ & $0.66$ & $41$ \\
GraFITi            & $1.02$ & $39$ & $0.61$ & $43$ \\
Chronos-2          & $1.15$ & $41$ & $0.70$ & $44$ \\
CRN                & $0.95$ & $58$ & $0.57$ & $55$ \\
Causal Transformer & $1.05$ & $63$ & $0.63$ & $59$ \\
\addlinespace[1.5pt]
IPTW               & $0.98$ & $61$ & $0.59$ & $57$ \\
IRM                & $1.60$ & $47$ & $0.95$ & $48$ \\
CORAL              & $0.22$ & $51$ & $0.13$ & $51$ \\
\addlinespace[1.5pt]
JEPA               & $1.08$ & $46$ & $0.65$ & $47$ \\
Soft constraint    & $1.00$ & $55$ & $0.60$ & $53$ \\
Clamp              & $0.57$ & $77$ & $0.34$ & $70$ \\
\rowcolor{tableblueLight}
\textbf{\textcolor{tableblue}{\ourmethod}}
                   & $0.93$ & $\mathbf{86}$ & $0.55$ & $\mathbf{79}$ \\
\bottomrule
\end{tabular}
\caption{Dose perturbation on the intervention-intensity test slices of MIMIC-IV circulatory.}
\label{tab:dose}
\end{wraptable}

Low error alone cannot separate a model that reads the dose as a drug from one that reads it as a marker of severity. We therefore perturb the doses at test time on the pooled test slices of the two intervention-intensity axes and measure how the one-step prediction $\hat{x}_{t+1}$ responds (Table~\ref{tab:dose}). Two perturbations are applied separately: the doses over the predicted interval $(\tau_t, \tau_{t+1}]$ are set to zero and the dose courses are permuted across patients within a test slice, which keeps their distribution and breaks their link to the patient. \emph{Sensitivity} is the mean absolute change in $\hat{x}_{t+1}$ over covered channels and asks whether the dose is used at all. \emph{Agreement} is the share of covered channel-steps on which that change has the sign $\sigma_c \cdot \mathrm{sign}(\Delta u)$, counted only where the changed doses imply a single sign as in Section~\ref{sec:operator}, and asks whether it is used in the right direction. Agreement is measured on the decoded $\hat{x}$, where the operator gives no guarantee, so it is \emph{not satisfied by construction}. \ourmethod\ agrees on \ourval{86\%} of eligible channel-steps under zeroing and \ourval{79\%} under permutation, against at most 77\% and 70\% respectively for any baseline, while its sensitivity is comparable to that of the forecasters, so the agreement is \emph{not obtained by responding less}. The sequence forecasters agree on fewer than half of eligible channel-steps: given a higher dose they \emph{move the channel against the drug's action}, which is the confounding by indication of Section~\ref{sec:rules} \emph{observed in the models rather than assumed}.

\subsection{Knowledge and component ablations} \label{sec:corpus}

\begin{wraptable}{r}{0.32\textwidth}
\scriptsize
\centering
\setlength{\tabcolsep}{2.5pt}
\renewcommand{\arraystretch}{0.95}
\begin{tabular}{@{}lrrr@{}}
\toprule
Corpus & Pairs & Cover. & OOD skill \\
\midrule
No rules (JEPA) & 0 & 0\% & 0.053 \\
\addlinespace[2.5pt]
\multicolumn{4}{@{}l}{\textcolor{tablegray}{\emph{How much is stated}}} \\
\rowcolor{tableblueLight}
\textcolor{tableblue}{\ourmethod}
    & \textcolor{tableblue}{70}
    & \textcolor{tableblue}{13.0\%}
    & \textcolor{tableblue}{0.198} \\
\quad 35, stratified & 35 & 7.9\% & 0.152 \\
\quad 17, stratified & 17 & 4.1\% & 0.107 \\
\addlinespace[2.5pt]
\multicolumn{4}{@{}l}{\textcolor{tablegray}{\emph{How much of it is right}}} \\
25\% inverted & 70 & 13.0\% & 0.121 \\
50\% inverted & 70 & 13.0\% & 0.024 \\
\rowcolor{tableredLight}
All inverted & 70 & 13.0\% & -0.058 \\
\addlinespace[2.5pt]
\multicolumn{4}{@{}l}{\textcolor{tablegray}{\emph{Which drug does what}}} \\
Drug labels permuted & 70 & 28.3\% & 0.066 \\
\addlinespace[2.5pt]
\multicolumn{4}{@{}l}{\textcolor{tablegray}{\emph{Where it comes from}}} \\
Mined directions & 70 & 13.0\% & 0.005 \\
\addlinespace[2.5pt]
\multicolumn{4}{@{}l}{\textcolor{tablegray}{\emph{What it asserts}}} \\
Signs + magnitude band & 70 & 13.0\% & 0.093 \\
\bottomrule
\end{tabular}
\caption{Rule corpora, all injected through the same operator. \emph{Cover.}: share of scored entries with a defined sign.}
\label{tab:corpus}
\end{wraptable}
Skill decreases monotonically as more of the stated directions are inverted (Table~\ref{tab:corpus}), from \(\ourval{0.198}\) at full agreement to \(-0.058\) when every direction is reversed, below the unconstrained backbone at $0.053$, so \textbf{a wrong corpus is worse than none}. Because the operator is the same in every arm, the gain must come from \emph{the content of the knowledge}, and withholding pairs degrades skill more gradually, to $0.152$ at half the corpus and $0.107$ at a quarter. The two corpora estimated from the record both lose most of the gain: a magnitude band, which asserts a size that no source supplies, falls to $0.093$ despite agreeing with pharmacology on every direction, and mined directions agree with pharmacology on only \(48\%\) of the $70$ pairs, \emph{no better than chance}, and return $0.005$. Consequently, the directions this construction needs cannot be recovered from a record in which dose is titrated against the quantity it changes. Permuting the drug labels changes only which agent each direction names and returns $0.066$ against the backbone's $0.053$ while covering \(28.3\%\) of scored entries rather than \(13.0\%\), so \emph{neither the constraint itself nor its reach produces the gain}. Removing the sign projection or the second forward pass lowers the skill of the full method from \(\ourval{+0.20}\) to \(+0.11\) and \(+0.08\) respectively, against \(+0.05\) for the backbone, so the gain comes from \emph{their combination} (Appendix~\ref{app:ablation}).
\section{Conclusion} \label{sec:conclusion}

A clinical world model is used where treatment practice differs from its training records, and its error there is what this paper set out to reduce. \ourmethod\ reduces it by isolating an intervention's effect with a second forward pass and constraining its sign to the stated pharmacological direction. The gain is largest under shifts in treatment intensity and holds across five axes and three databases; an inverted or data-mined corpus performs worse than no corpus, so {what generalizes is the pharmacology and not the constraint}.

\bibliographystyle{style/icml2025}
\bibliography{999_reference}

\clearpage
\appendix

\section{The forward pass, step by step} \label{app:algorithm}

Algorithm~\ref{alg:step} states the per-step computation of Section~\ref{sec:decomposition} and Section~\ref{sec:operator} in full.
\begin{algorithm}[t]
\caption{One prediction step of \ourmethod. The two passes are advanced as parallel recurrent states over a single sweep of the stay; they share the encoded state and rejoin only at the correction.}
\label{alg:step}
\begin{algorithmic}[1]
\REQUIRE encoder $E$, recurrent core $f$, predictor $P$, calibration map $G$, rule corpus
$\mathcal{R}$; states $h^{u}_{t-1}, h^{0}_{t-1}$
\STATE $z_t \gets E([x_t, m_t, \phi_t])$ \COMMENT{observed; shared by both passes}
\STATE $a^{u}_t \gets$ intervention embedding under the recorded doses
\STATE $a^{0}_t \gets a^{u}_t$ with the dose entry of every covered drug set to $0$
\STATE $h^{u}_t \gets f(h^{u}_{t-1}, [z_t, a^{u}_t, \delta_t])$
\STATE $h^{0}_t \gets f(h^{0}_{t-1}, [z_t, a^{0}_t, \delta_t])$
\STATE $\hat{z}_{t+1}(u) \gets P(h^{u}_t, a^{u}_t)$
\STATE $\hat{z}_{t+1}(0) \gets P(h^{0}_t, a^{0}_t)$
\STATE $\Delta z_{t+1} \gets \hat{z}_{t+1}(u) - \hat{z}_{t+1}(0)$
\STATE $y \gets G\,\Delta z_{t+1}$
\FORALL{channels $c$}
  \STATE $\sigma_c \gets$ the common sign in $\mathcal{R}$ over the drugs active
         at $t$ that touch $c$, if they agree; undefined otherwise
  \STATE $\Pi(y)_c \gets \sigma_c \max(\sigma_c y_c, 0)$ if $\sigma_c$ is
         defined, else $y_c$
\ENDFOR
\STATE $\hat{z}_{t+1} \gets \hat{z}_{t+1}(0) + \Delta z_{t+1}
        + G^{\top}(\Pi(y) - y)$
\STATE $\hat{x}_{t+1} \gets x_t + \mathrm{MLP}(\hat{z}_{t+1})$
\item[]
\item[] \textbf{Next step.} $z_{t+1} \gets E([x_{t+1}, m_{t+1}, \phi_{t+1}])$
        when the next state is observed, and $z_{t+1} \gets \hat{z}_{t+1}$ under
        free-running rollout, so the corrected latent, not the decoded output,
        is what both recurrent states read next.
\end{algorithmic}
\end{algorithm}

\section{Rule Corpus} \label{app:rules}

\begin{table}[t]
\caption{Per-drug statistics: number of channels the drug has rules on, scored entries while it is active, and rule coverage (Cover.), the share of those entries with a defined sign.}
\label{tab:drug_stats}
\begin{center}
\small
\setlength{\tabcolsep}{5pt}
\renewcommand{\arraystretch}{1.0}
\begin{tabular}{@{}lccc@{}}
\toprule
Drug & Channels & Scored entries & Cover. \\
\midrule
propofol        & 3 & 341,457 & 14.2\% \\
fluid\_bolus    & 2 & 236,013 & 11.5\% \\
phenylephrine   & 3 & 207,196 & 15.0\% \\
fentanyl        & 2 & 199,835 & 12.8\% \\
morphine        & 3 & 154,708 & 13.5\% \\
furosemide      & 2 & 100,684 & 10.0\% \\
midazolam       & 3 &  85,647 & 14.8\% \\
hydralazine     & 3 &  82,021 & 12.0\% \\
metoprolol      & 3 &  81,694 & 13.2\% \\
epinephrine     & 3 &  72,454 & 11.8\% \\
nitroglycerin   & 2 &  65,831 & 12.5\% \\
dexmedetomidine & 3 &  63,616 & 14.0\% \\
prbc            & 2 &  63,132 & 10.5\% \\
norepinephrine  & 2 &  61,495 & 13.0\% \\
amiodarone      & 3 &  53,497 & 11.2\% \\
hydromorphone   & 3 &  50,248 & 15.2\% \\
labetalol       & 3 &  33,263 & 12.2\% \\
ffp             & 2 &  33,101 & 10.8\% \\
nicardipine     & 2 &  32,913 &  9.5\% \\
milrinone       & 3 &  31,606 & 14.5\% \\
dopamine        & 3 &  29,113 & 12.6\% \\
dobutamine      & 5 &  27,704 & 16.0\% \\
lorazepam       & 1 &  25,286 & 17.0\% \\
vasopressin     & 2 &  19,915 & 11.0\% \\
esmolol         & 3 &  13,091 & 13.5\% \\
cisatracurium   & 1 &  10,131 & 17.5\% \\
nitroprusside   & 2 &   5,288 & 10.2\% \\
diazepam        & 1 &     545 & 18.0\% \\
\bottomrule
\end{tabular}
\end{center}
\end{table}

\begin{table}[t]
\caption{Channel-level rule counts, conflict and silence on MIMIC-IV circulatory.}
\label{tab:channel_rules}
\begin{center}
\small
\setlength{\tabcolsep}{4pt}
\renewcommand{\arraystretch}{1.0}
\begin{tabular}{@{}lcccc@{}}
\toprule
Channel & Rules & Conflict & Silent & Conflict/treated \\
\midrule
RR             &      8 &  0.0\% & 82.5\% &  0.0\% \\
HR             &     14 &  4.4\% & 76.6\% &  8.7\% \\
NBPm           &     23 &  2.7\% & 85.2\% &  5.2\% \\
ABPm           &     21 &  5.6\% & 81.4\% & 12.4\% \\
NBPs/d, ABPs/d & 4      &  0.0\% & 99.0\% &  0.0\% \\
SpO2           &      0 &  0.0\% & 100.0\% &  0.0\% \\
\midrule
Pooled         &     70 &  2.3\% & 84.7\% &  4.1\% \\
\bottomrule
\end{tabular}
\end{center}
\end{table}

\begin{table}[t]
\caption{The full set of intervention rules: the channels each drug raises and lowers. Only signs are stated; no magnitude is asserted.}
\label{tab:rules}
\begin{center}
\small
\setlength{\tabcolsep}{3pt}
\renewcommand{\arraystretch}{1.05}
\begin{tabular}{@{}lll@{\hspace{0.6em}}lll@{}}
\toprule
Drug & Raises & Lowers & Drug & Raises & Lowers \\
\midrule
\multicolumn{3}{@{}l}{\textcolor{tablegray}{\emph{Vasopressors and inotropes}}} & \multicolumn{3}{@{}l}{\textcolor{tablegray}{\emph{Rate control and antiarrhythmics}}} \\
norepinephrine & NBPm, ABPm &  & metoprolol &  & HR, NBPm, ABPm \\
epinephrine & HR, NBPm, ABPm &  & esmolol &  & HR, NBPm, ABPm \\
phenylephrine & NBPm, ABPm & HR & labetalol &  & HR, NBPm, ABPm \\
vasopressin & NBPm, ABPm &  & amiodarone &  & HR, NBPm, ABPm \\
dopamine & HR, NBPm, ABPm &  & \multicolumn{3}{@{}l}{\textcolor{tablegray}{\emph{Vasodilators}}} \\
dobutamine & HR, NBPs, ABPs & NBPd, ABPd & nitroglycerin &  & NBPm, ABPm \\
milrinone & HR & NBPm, ABPm & nicardipine &  & NBPm, ABPm \\
\multicolumn{3}{@{}l}{\textcolor{tablegray}{\emph{Sedatives and opioids}}} & nitroprusside &  & NBPm, ABPm \\
propofol &  & NBPm, ABPm, RR & hydralazine & HR & NBPm, ABPm \\
fentanyl &  & RR, HR & \multicolumn{3}{@{}l}{\textcolor{tablegray}{\emph{Volume, blood products and other}}} \\
midazolam &  & RR, NBPm, ABPm & fluid\_bolus & NBPm, ABPm &  \\
dexmedetomidine &  & HR, NBPm, ABPm & prbc & NBPm, ABPm &  \\
lorazepam &  & RR & ffp & NBPm, ABPm &  \\
diazepam &  & RR & furosemide &  & NBPm, ABPm \\
morphine &  & RR, NBPm, HR & cisatracurium &  & RR \\
hydromorphone &  & RR, NBPm, HR & & & \\
\bottomrule
\end{tabular}
\end{center}
\end{table}

The corpus consists of 70 (intervention, channel, sign) triples over 28 drugs, listed in Table~\ref{tab:rules}, where only signs are stated and magnitudes are excluded for the reason given for the rule corpus in Section~\ref{sec:method}. Table~\ref{tab:drug_stats} gives, for each drug, the number of channels it has rules on, the scored entries while it is active, and its coverage, which ranges from $9.5\%$ to $18.0\%$, so no single drug dominates the constraint. Drugs are grouped by the classes that define the intervention-intensity axes of Section~\ref{sec:setup}, so each rule can be matched to the axis on which it is evaluated. Four drugs have rules in both directions, and all four are intended: dobutamine raises systolic pressure and lowers diastolic pressure, and hydralazine, milrinone and phenylephrine each combine their pressure effect with an opposite reflex change in heart rate. The channels are heart rate (HR), respiratory rate (RR), peripheral oxygen saturation (SpO$_2$), and systolic, diastolic and mean blood pressure, measured non-invasively (NBPs, NBPd, NBPm) or from an arterial line (ABPs, ABPd, ABPm).

Table~\ref{tab:channel_rules} reports, for each channel $c$, how often active drugs leave it unconstrained. \emph{Rules} is the number of corpus rules that name $c$. \emph{Conflict} is the share of all scored entries of $c$ on which active drugs have rules on $c$ that disagree, so that no sign is imposed, and \emph{Silent} is the share on which no active drug has a rule on $c$, whether because no corpus drug is active or because the active ones do not touch $c$. \emph{Conflict/treated} divides the conflict share by the share of entries at which at least one corpus drug is active, and so measures how often treatment, when present, leaves the channel unconstrained. The \emph{Pooled} row aggregates over the scored entries of all nine channels.

\textbf{Coverage per channel.} The 70 rules are unevenly spread over the nine channels. Rules per channel, as raising / lowering (derived from Table~\ref{tab:rules}):

\begin{center}
\small
\begin{tabular}{@{}lllllllll@{}}
\toprule
HR & RR & SpO$_2$ & NBPs & NBPd & NBPm & ABPs & ABPd & ABPm \\
\midrule
5/9 & 0/8 & 0/0 & 1/0 & 0/1 & 8/15 & 1/0 & 0/1 & 8/13 \\
\bottomrule
\end{tabular}
\end{center}

\noindent Three channels carry 58 of the 70 rules, SpO$_2$ carries none, and the four systolic and diastolic channels carry one each, which limits the operator in two ways. Where no rule covers a channel, the operator imposes no constraint, and the three densely covered channels are also the ones where a raising and a lowering drug can be active together, in which case no sign is implied and again no constraint is imposed; propofol with norepinephrine is a common combination that produces this case.

\section{Stream Representation} \label{app:streams}

Aligned state rows are 2.99M and 1.24M for the MIMIC-IV circulatory and injury cohorts, 1.27M and 0.57M for the MIMIC-III cohorts, and 1.00M for VitalDB. The exposure field of Section~\ref{sec:streams} is a bucketed dose history, in which the dose of each intervention channel is summed within five fixed elapsed-time windows (0--0.5\,h, 0.5--2\,h, 2--6\,h, 6--24\,h, and $>$24\,h), giving an $A \times 5$ matrix at every prediction time. Infusions need different handling from boluses, because a charted infusion row restates the current rate and does not add a dose, so each rate channel holds its most recent value (zero-order hold) and is integrated over each window to give the administered dose. Treating such rows as impulses would instead make exposure scale with charting frequency, which tracks acuity \citep{agniel2018biases}, and the model would learn nursing behavior in place of pharmacology, while ventilator set-points, which carry no dose semantics, bypass the bucketing as their current held value. On the state side, the encoder receives $[x_t, m_t, \phi_t]$, where $\phi_{t,c} = \exp(-\delta_{t,c}/6\text{h})$ is a per-channel freshness computed from the time since channel $c$ was last measured. This is a staleness weight on observed information and involves no pharmacokinetic modeling: its time constant is fixed, and half-weight is reached at $6\ln 2 \approx 4.2$\,h. The first-order elimination kernel that the bucketed history replaces is also a poor prior when its constant $\theta_k$ is left free: fitting it end-to-end yields $4.7$\,h for norepinephrine, whose elimination half-life is about two minutes.

\section{Design Choices and Limits} \label{app:design}

\subsection{Calibration map sensitivity} \label{app:gcal}

Table~\ref{tab:gcal} varies the initialization of $G$, the weight of the anchor term, the orthonormality of its rows, and the protocol under which it is fitted, over five seeds each. \emph{OOD RMSE} is averaged over the five shift axes of Table~\ref{tab:main}, as in Table~\ref{tab:ablation}, so the Default row is \ourmethod\ itself. \emph{Binding rate} is the fraction of the (step, channel) pairs on which a sign is defined that the projection clips, as a range across the five OOD cohorts. \emph{Stability} is the within-run cosine between row $c$ at 80\% and at 100\% of training, averaged over scored channels, and reports whether $G$ is still moving when training stops. \emph{Seed agreement} is the between-seed cosine of row $c$, as median and minimum over scored channels, and reports whether the anchor term identifies a single set of directions; two unrelated unit vectors in $\mathbb{R}^{d}$ agree to $\sqrt{2/\pi d} \approx 0.14$ on average, and the Control row sits at or below that level. \emph{Sign flips} counts scored channels on which two seeds disagree in sign, out of the $9$ scored channels. $\|GG^{\top}\!-\!I\|_\infty$ measures departure from orthonormal rows.

\begin{table}[t]
\caption{Sensitivity of the calibration map $G$ to initialization, anchor loss weight, parameterization and fitting protocol, restricted to the $9$ scored channels. \emph{Bind.}: binding rate; \emph{Stab.}: stability; \emph{Agree.}: seed agreement; \emph{Flips}: sign flips; all are defined in the text.}
\label{tab:gcal}
\begin{center}
\scriptsize
\setlength{\tabcolsep}{2.5pt}
\begin{tabular*}{\textwidth}{@{\extracolsep{\fill}}lcccccc@{\hspace{1.2em}}c@{\hspace{1.2em}}lcccccc@{}}
\toprule
Variant & \makecell{RMSE\\($\downarrow$)} & \makecell{Bind.\\(\%)} & \makecell{Stab.\\($\to 1$)} & \makecell{Agree.\\med / min} & \makecell{Flips\\($\downarrow$)} & $\|GG^{\top}\!-\!I\|_{\infty}$ &  & Variant & \makecell{RMSE\\($\downarrow$)} & \makecell{Bind.\\(\%)} & \makecell{Stab.\\($\to 1$)} & \makecell{Agree.\\med / min} & \makecell{Flips\\($\downarrow$)} & $\|GG^{\top}\!-\!I\|_{\infty}$ \\
\cmidrule(r){1-7}\cmidrule(l){9-15}
\cellcolor{tableblueLight}\textcolor{tableblue}{\textbf{Default}} & \cellcolor{tableblueLight}\textcolor{tableblue}{$8.69$} & \cellcolor{tableblueLight}\textcolor{tableblue}{54--61} & \cellcolor{tableblueLight}\textcolor{tableblue}{$0.999$} & \cellcolor{tableblueLight}\textcolor{tableblue}{$0.98 / 0.94$} & \cellcolor{tableblueLight}\textcolor{tableblue}{$0 / 9$} & \cellcolor{tableblueLight}\textcolor{tableblue}{$<10^{-6}$} &  & No re-orthonorm. & $8.94$ & 52--60 & $0.995$ & $0.88 / 0.61$ & $1 / 9$ & $0.42$ \\
Axis-aligned rows & $8.70$ & 53--60 & $0.999$ & $0.97 / 0.92$ & $0 / 9$ & $<10^{-6}$ &  & Fit jointly & $8.75$ & 48--58 & $0.99$ & $0.90 / 0.70$ & $0 / 9$ & $<10^{-6}$ \\
$\lambda_{\mathrm{anchor}} \times 0.1$ & $8.89$ & 50--53 & $0.98$ & $0.71 / 0.38$ & $2 / 9$ & $<10^{-6}$ &  & $G$ attached & $8.72$ & 51--58 & $0.999$ & $0.93 / 0.80$ & $0 / 9$ & $<10^{-6}$ \\
$\lambda_{\mathrm{anchor}} \times 10$ & $8.73$ & 55--62 & $0.9995$ & $0.99 / 0.96$ & $0 / 9$ & $<10^{-6}$ &  & Random $G$ (control) & $9.24$ & 49--51 & n/a & $0.06 / 0.01$ & $4 / 9$ & $<10^{-6}$ \\
\bottomrule
\end{tabular*}
\end{center}
\end{table}

Four readings follow. First, the directions are set by the anchor term and not by where optimization starts: rows initialized along the coordinate axes reach the same solution as the default to a median cosine of $0.97$, with no channel reversing sign and OOD error unchanged at $8.70$. Second, the anchor term is what does that work. Weakening it tenfold costs $0.20$ RMSE, drops between-seed agreement to $0.71$ with two channels reversing, and pulls the binding rate toward the $50\%$ a sign-uninformative $G$ would produce; strengthening it tenfold changes nothing, so the default sits on a plateau rather than at a tuned point. Third, orthonormal rows matter for the write-back and not for the decision: leaving them unnormalized leaves the binding rate intact, since which components $\Pi$ clips depends only on signs (Remark~\ref{rem:scale}), but costs $0.25$ RMSE, because $GG^{\top} \ne I$ leaves the corrected latent near the cone rather than in it and lets rows drift toward each other so that a constraint on one channel reaches another. Fourth, the detachment of $G$ inside the projection is a precaution rather than a repair: with $G$ attached, a rotation that made the constraint vacuous is available as a descent direction, but it is not taken, and the binding rate stays at $51$ to $58\%$. We keep the detachment because nothing in the objective rules the rotation out, not because we observed it.

The Control row replaces $G$ with a random orthonormal frame and fits nothing. Its rows agree across seeds at chance, four of the nine scored channels reverse, and OOD error rises to $9.24$, above the default at $8.69$. A cone imposed in arbitrary directions is worse than the anchored one.

\subsection{How far back the second pass reaches} \label{app:window}

\begin{figure}[htbp]
    \centering
    \includegraphics[width=\linewidth]{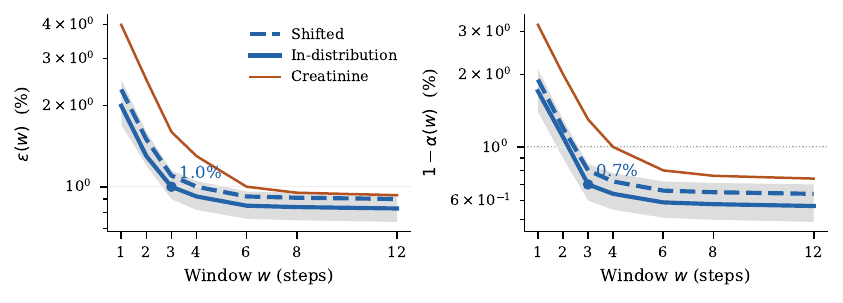}
    \caption{Effect of the trailing window size $w$ on relative state change error $\varepsilon(w)$ (left) and decision inconsistency $1-\alpha(w)$ (right). Solid curves denote in-distribution slices, dashed curves denote shifted slices, and the orange curve represents slow channels (e.g., creatinine). The chosen default window $w=3$ is marked with a dot.}
    \label{fig:window}
\end{figure}

Section~\ref{sec:decomposition} withdraws every recorded dose, so $\Delta z_{t+1}$ carries the contribution of the whole dose history, while the cone is built from the drugs active at $t$. To check that the two refer to nearly the same interventions, let $\Delta z^{(w)}_{t+1}$ be the difference obtained by withdrawing doses only over the trailing $w$ steps, and write
\begin{align*}
\varepsilon(w) &\;=\; \operatorname{median}_{t}\, \frac{\bigl\lVert \Delta z^{(w)}_{t+1} - \Delta z_{t+1} \bigr\rVert}{\bigl\lVert \Delta z_{t+1} \bigr\rVert}, \\
\alpha(w) &\;=\; \Pr_{(t,c):\,\sigma_c \text{ defined}} \Bigl[\operatorname{sign}\bigl(G\Delta z^{(w)}\bigr)_c = \operatorname{sign}\bigl(G\Delta z\bigr)_c \Bigr].
\end{align*}
$\varepsilon$ measures how much of the effect a short window misses; $\alpha$ measures how often that miss would change a clipping decision, which is the only thing the operator reads. We find $\varepsilon(3) = 1\%$ and $\alpha(3) = 99.3\%$, with Figure~\ref{fig:window} giving both against $w$ per channel and separately on the in-distribution and shifted slices. Because slow channels decay more gradually over time, the cone acts as a conservative approximation for them.

\subsection{What the second pass withdraws}

The two forward passes described in Section~\ref{sec:method} differ in drug doses alone, since the ventilator set-points and the dialysis and ventilation status flags share the intervention tensor but carry no dose semantics, and withdrawing them would ask what happens with the ventilator switched off. Their contribution would then enter $\Delta z$ and be projected, although the corpus asserts nothing about them.

\subsection{Why the intervention stream stays in the recurrent core}

An alternative is to sever the intervention stream from the recurrent core, in which case setting $u = 0$ would return the untreated course exactly, but we do not do this, because a patient's autonomous course depends on their intervention history: without that history, the core would lose information about the patient's state, for example that norepinephrine has been running for six hours.



\section{Component Ablation} \label{app:ablation}

With the corpus held fixed, we remove one component at a time and report out-of-distribution skill and RMSE, averaged across all five shift axes (Table~\ref{tab:ablation}). Removing the sign projection while keeping the two-pass decomposition tests whether the decomposition alone already regularizes the predictor: skill falls to $+0.11$, retaining $40\%$ of the gain over the backbone, so the decomposition carries signal on its own and the projection accounts for the rest. Removing the second forward pass and applying the projection to the decoded prediction $\hat{y}$ instead of the isolated effect $\Delta z = z(u) - z(0)$ is more damaging: skill drops to $+0.08$, within $0.03$ of the unconstrained backbone, so the projection helps mainly when it acts on the intervention effect itself. Since neither component alone recovers the full method, the gain comes from their combination: the second pass isolates a quantity about which pharmacology makes statements, and the projection enforces them on it.

\begin{table}[t]
\caption{Ablation of \ourmethod{} on MIMIC-IV. Each row removes a single component from the full method; rows are \emph{not} cumulative. Mean over five seeds.}
\label{tab:ablation}
\begin{center}
\small
\setlength{\tabcolsep}{4pt}
\begin{tabular}{@{}lcc@{}}
\toprule
Configuration & Skill ($\uparrow$) & RMSE ($\downarrow$) \\
\midrule
\rowcolor{tableblueLight}
\textcolor{tableblue}{\ourmethod}
    & \textbf{\textcolor{tableblue}{+0.20}}
    & \textbf{\textcolor{tableblue}{8.69}} \\
\quad $-$ sign projection
    & +0.11
    & 9.68 \\
\quad $-$ second forward pass
    & +0.08
    & 10.08 \\
\midrule
Backbone (JEPA)
    & +0.05
    & 10.45 \\
\bottomrule
\end{tabular}
\end{center}
\end{table}

\section{Hyperparameters and Training Details}
\label{app:hparams}
Table~\ref{tab:hparams} lists all settings used for the reported runs; settings not listed are left at their defaults. Epoch counts are the same for every slice, and no minimum number of gradient steps is imposed on smaller slices. The operator's calibration map $G$ is a $C \times d$ matrix with orthonormal rows, initialized as a random orthonormal frame and trained only by an anchor loss, which reads the latent out onto the physical channels. The projection uses a detached copy of $G$, so the constraint does not shape the map through which it is expressed, and since the condition $G G^{\top} = I$ keeps the write-back minimum-norm and confined to $G$'s row space but is not preserved by a gradient step, $G$ is re-orthonormalized after every optimizer step.
\begin{table}[t]
\caption{Hyperparameters. \ourmethod{} and every baseline share the data interface and the optimization settings; the two stages differ only in what they train and in how the model is selected.}
\label{tab:hparams}
\begin{center}
\small
\setlength{\tabcolsep}{3pt}
\begin{tabular}{@{}ll@{\hspace{1em}}ll@{}}
\toprule
Setting & Value & Setting & Value \\
\midrule
\multicolumn{2}{@{}l}{\textcolor{tablegray}{\emph{Data interface}}} & \multicolumn{2}{@{}l}{\textcolor{tablegray}{\emph{Stage two: readout, backbone frozen}}} \\
\quad Grid resolution & 1\,h & \quad Readout & MLP $d \to d_{\mathrm{model}} \to C$ \\
\quad Horizon per stay & $\le 96$\,h or $512$ steps & \quad Epochs & 15 \\
\quad Batch size & 16 & \quad Optimizer & AdamW, lr $10^{-3}$ \\
\quad Splits & subject-disjoint & \quad Model selection & validation RMSE \\
\quad Seeds & 5 & & \\
\addlinespace[3pt]
\multicolumn{2}{@{}l}{\textcolor{tablegray}{\emph{Backbone}}} & \multicolumn{2}{@{}l}{\textcolor{tablegray}{\emph{Shared}}} \\
\quad Hidden width $d_{\mathrm{model}}$ & 128 & \quad Weight decay & 0 \\
\quad GRU layers & 3 & \quad Gradient clipping & max-norm 1.0 \\
\quad Latent dimension $d$ & 32 & \quad Non-finite batches & skipped, not stepped \\
\quad Dropout & 0.1 & & \\
\quad Intervention encoder & linear & & \\
\addlinespace[3pt]
\multicolumn{2}{@{}l}{\textcolor{tablegray}{\emph{Stage one: encoder, predictor, intervention pathway}}} & \multicolumn{2}{@{}l}{\textcolor{tablegray}{\emph{Baselines}}} \\
\quad Epochs & 8 & \quad General & published defaults \\
\quad Optimizer & AdamW, lr $10^{-3}$ & \quad Windowed forecasters & lookback 24 steps \\
\quad Schedule & cosine decay to $10^{-5}$ & & \\
\quad SIGReg weight $\lambda$ & 0.05 & & \\
\quad Auxiliary loss weight & 0.5 & & \\
\quad Model selection & validation latent loss & & \\
\bottomrule
\end{tabular}
\end{center}
\end{table}

\section{Holdout Construction} \label{app:holdouts}

Every partition follows the split-then-slice protocol from Setup, with a deterministic seed for the subject-level split, and a model trained on one slice's training set is evaluated on the test set of every slice in the same partition, where the entries whose test slice differs from the training slice give the out-of-distribution error. This section lists the strata.

\textbf{Population axes.} \emph{Sex} produces two slices (male, female), and \emph{Age} uses the five bins of BEDS-Bench~\citep{avati2021beds}, a published benchmark for evaluating EHR models under distribution shift: $(15,50]$, $(50,60]$, $(60,70]$, $(70,80]$, and $80{+}$. \emph{Care unit} takes the first ICU unit of each hospital admission (resolving transfers), keeping at most the four largest units by admission count, while \emph{Disease subtype} maps the primary discharge diagnosis through ICD-9/10 ranges into ten circulatory-system groups (rheumatic and valvular, hypertensive, ischemic heart disease, pulmonary heart disease, heart failure, arrhythmia, cardiomyopathy and other heart disease, cerebrovascular, peripheral vascular, venous and other), keeping at most the four largest and assigning no disease slice to a stay whose diagnosis falls outside the cohort's ICD chapter. \emph{Era} uses MIMIC-IV's \texttt{anchor\_year\_group} field, which bins each patient's first admission into a four-year window (2008--2010, 2011--2013, 2014--2016, 2017--2019), and all era slices are kept.

\textbf{Intervention-intensity axes.} Four drug-class axes (vasopressor, sedative, rate control, vasodilator) stratify stays by how intensively the patient was treated, and for each class a per-stay \emph{burden} is computed: each drug's per-stay peak dose is replaced by its within-drug rank among exposed stays, and the class burden is the maximum rank across drugs in the class, which avoids any assumption of cross-drug equipotency. Stays with zero exposure to the class receive no burden score and join no slice, and exposure to the vasopressor and sedative classes is $40.9\%$ and $64.0\%$ of stays on MIMIC-IV circulatory and $39.6\%$ and $44.9\%$ on MIMIC-III circulatory. Among exposed stays, the 25th and 75th percentiles of the burden define three bands, \emph{low} ($\le$ P25), \emph{mid} (P25--P75), and \emph{high} ($>$ P75), of which only the low and high bands become slices, the mid band serving as a buffer that keeps patients near a cut-off out of both extremes. An intervention axis therefore contributes two slices wherever both bands clear the minimum slice size below, and fewer where they do not.

The four drug classes and their constituent drugs are:
\begin{itemize}[nosep,leftmargin=1.5em]
\item \emph{Vasopressors}: norepinephrine, epinephrine, phenylephrine, vasopressin, dopamine, dobutamine, milrinone.
\item \emph{Sedatives}: propofol, fentanyl, midazolam, dexmedetomidine, lorazepam, diazepam.
\item \emph{Rate control}: metoprolol, esmolol, labetalol, amiodarone. On the CareVue-era MIMIC-III database, metoprolol and labetalol lack canonical charting identifiers, leaving only esmolol and amiodarone, which reduces the exposure rate.
\item \emph{Vasodilators}: nitroglycerin, nicardipine, nitroprusside, hydralazine.
\end{itemize}

\textbf{Minimum slice size.} Any slice with fewer than 450 distinct hospital admissions (across train and test combined) is excluded from its partition. This threshold is a run parameter and is set lower than BEDS-Bench's default of 2\,000 because the diagnosis-selected cohorts are roughly half the size of the full MIMIC population.

\textbf{Training-set balancing.} Within each partition, every slice's training set is downsampled to the admission count of the smallest slice (sampling without replacement at the admission level, re-drawn per seed), so that cross-slice comparisons are not confounded by unequal training volume. Table~\ref{tab:slices} lists the resulting slices on MIMIC-IV circulatory, where counts are distinct hospital admissions at seed $0$ and a test fraction of $0.2$, before training-set balancing; after balancing, every slice in a partition trains on the admission count of the smallest slice in that partition.

\begin{table}[t]
\caption{Slices per partition on MIMIC-IV circulatory, in admissions. A partition's excluded count is the number of its groups that fall below the minimum slice size.}
\label{tab:slices}
\begin{center}
\small
\setlength{\tabcolsep}{4pt}
\begin{tabular}{@{}lrr@{\hspace{3em}}lrr@{}}
\toprule
Partition and slice & Train & Test & Partition and slice & Train & Test \\
\midrule
\multicolumn{3}{@{}l}{\textcolor{tablegray}{\emph{Sex}}} & \multicolumn{3}{@{}l}{\textcolor{tablegray}{\emph{Era}}} \\
\quad female & 6162 & 1536 & \quad 2008--2010 & 4532 & 1148 \\
\quad male & 8824 & 2233 & \quad 2011--2013 & 3407 & 863 \\
\multicolumn{3}{@{}l}{\textcolor{tablegray}{\emph{Age}}} & \quad 2014--2016 & 3705 & 944 \\
\quad 15--50 & 1702 & 412 & \quad 2017--2019 & 3342 & 814 \\
\quad 50--60 & 2611 & 685 & \multicolumn{3}{@{}l}{\textcolor{tablegray}{\emph{Vasopressor dose}}} \\
\quad 60--70 & 4061 & 972 & \quad high & 1940 & 490 \\
\quad 70--80 & 3899 & 968 & \quad low & 1903 & 492 \\
\quad 80+ & 2713 & 732 & \multicolumn{3}{@{}l}{\textcolor{tablegray}{\emph{Sedative dose}}} \\
\multicolumn{3}{@{}l}{\textcolor{tablegray}{\emph{Care unit} (5 excluded)}} & \quad high & 2336 & 560 \\
\quad CVICU & 6459 & 1592 & \quad low & 2227 & 564 \\
\quad CCU & 3165 & 765 & \multicolumn{3}{@{}l}{\textcolor{tablegray}{\emph{Rate-control dose}}} \\
\quad MICU & 1158 & 320 & \quad high & 893 & 224 \\
\quad SICU & 1801 & 436 & \quad low & 1321 & 335 \\
\multicolumn{3}{@{}l}{\textcolor{tablegray}{\emph{Disease subtype} (6 excluded)}} & \multicolumn{3}{@{}l}{\textcolor{tablegray}{\emph{Vasodilator dose}}} \\
\quad cardiomyopathy / other heart & 1536 & 380 & \quad high & 1709 & 438 \\
\quad cerebrovascular & 3456 & 880 & \quad low & 1703 & 404 \\
\quad ischemic heart disease & 4501 & 1152 & & & \\
\quad rheumatic / valvular & 1180 & 304 & & & \\
\bottomrule
\end{tabular}
\end{center}
\end{table}

Not every axis is available on every cohort, and Table~\ref{tab:axis_avail} gives the number of slices each partition yields. Era is a MIMIC-IV field and has no counterpart in the CareVue era, VitalDB charts no rate-control or vasodilator agent, so those two axes cannot be built on it, and a partition needs at least two slices to contribute an off-diagonal, which rules out rate control on MIMIC-III circulatory and the vasodilator axis on MIMIC-IV injury.

\begin{table}[t]
\caption{Slices per partition on each cohort, at the same minimum slice size. A dash is a partition that yields no slice, because its groups all fall below the minimum size or the field does not exist in that database; \emph{n/a} is a partition whose drug class is never charted in that database.}
\label{tab:axis_avail}
\begin{center}
\small
\setlength{\tabcolsep}{6pt}
\begin{tabular}{@{}l ccccc cccc@{}}
\toprule
& \multicolumn{5}{c}{\emph{Population}} & \multicolumn{4}{c}{\emph{Intervention intensity}} \\
\cmidrule(lr){2-6}\cmidrule(l){7-10}
Cohort & Sex & Age & Care unit & Disease & Era & Vasopr. & Sedative & Rate ctrl. & Vasodil. \\
\midrule
MIMIC-IV circ. & 2 & 5 & 4 & 4 & 4 & 2 & 2 & 2 & 2 \\
MIMIC-IV inj. & 2 & 5 & 4 & 4 & 4 & 2 & 2 & 2 & 1 \\
MIMIC-III circ. & 2 & 5 & 4 & 4 & --- & 2 & 2 & 1 & 2 \\
MIMIC-III inj. & 2 & 5 & 3 & 2 & --- & --- & 2 & --- & --- \\
VitalDB & 2 & 4 & 2 & --- & --- & --- & 2 & n/a & n/a \\
\bottomrule
\end{tabular}
\end{center}
\end{table}

\section{Cost} \label{app:cost}

\begin{table}[t]
\caption{Cost of one step, for every method in Table~\ref{tab:main}. Median over many steps after warm-up, one device, fixed batch size and sequence length, identical data for every architecture, with every timed region synchronized. Chronos-2 is zero-shot, so it has no training step to time. PatchTST's attention runs over patches, so its cost is $\mathcal{O}(T^2 d / P^2)$ in a patch length $P$.}
\label{tab:cost}
\begin{center}
\small
\setlength{\tabcolsep}{4pt}
\begin{tabularx}{\textwidth}{@{}l *{4}{Y}@{}}
\toprule
Method & Train ms & Infer ms & Peak MB & Params \\
\midrule
\multicolumn{5}{@{}l}{\textcolor{tablegray}{\emph{Reference}}} \\
\quad Persistence & 2.6 & 0.2 & 1.4 & 20\,K \\
\quad Chronos-2 & --- & 7\,839 & 338 & 119.5\,M \\
\addlinespace[3pt]
\multicolumn{5}{@{}l}{\textcolor{tablegray}{\emph{Recurrent}: $\mathcal{O}(Td^2)$, linear in $T$}} \\
\quad CRN & 30.8 & 16.5 & 87 & 341\,K \\
\quad IRM & 35.7 & 6.4 & 173 & 402\,K \\
\quad CORAL & 58.7 & 6.4 & 174 & 402\,K \\
\quad JEPA (backbone) & 21.0 & 6.6 & 173 & 402\,K \\
\quad Soft constraint & 75.0 & 6.5 & 174 & 402\,K \\
\quad Clamp & 43.4 & 19.5 & 196 & 375\,K \\
\rowcolor{tableblueLight}
\quad \textbf{\textcolor{tableblue}{\ourmethod}} & \textbf{\textcolor{tableblue}{46.9}} & \textbf{\textcolor{tableblue}{25.2}} & \textbf{\textcolor{tableblue}{199}} & \textbf{\textcolor{tableblue}{402\,K}} \\
\addlinespace[3pt]
\multicolumn{5}{@{}l}{\textcolor{tablegray}{\emph{Attention}: $\mathcal{O}(T^2 d)$, quadratic in $T$}} \\
\quad Causal Transformer & 72.6 & 23.5 & 238 & 1.9\,M \\
\quad PatchTST & 75.5 & 58.0 & 1\,188 & 88\,K \\
\quad GraFITi & 146.9 & 160.2 & 2\,619 & 657\,K \\
\bottomrule
\end{tabularx}
\end{center}
\end{table}

\ourmethod\ evaluates the transition dynamics with and without the administered intervention, which requires two forward passes per step. Measured on a single device under a fixed batch and sequence configuration, a training step for \ourmethod\ takes 46.9\,ms against 21.0\,ms for the unconstrained JEPA backbone (Table~\ref{tab:cost}), a factor of 2.23 that slightly exceeds the 2.0 expected from the two passes, with the remainder attributable to the projection. The projection itself is parameter-free, and the only added weights belong to the calibration map $G$ ($C \times d = 288$ parameters), 0.07\% of the model's parameters, while inference takes 25.2\,ms per step against the backbone's 6.6\,ms. For clinical deployment, a more relevant comparison is with a pretrained forecaster: Chronos-2 needs $7\,839$\,ms per inference step and $119.5$\,M parameters, about 300 times the runtime and the parameter count of \ourmethod. The overhead is a constant factor, and \ourmethod\ keeps the linear $\mathcal{O}(Td^2)$ complexity and memory footprint of recurrent architectures.

\section{Convergence} \label{app:training}

To check that no model is undertrained under the budget of Appendix~\ref{app:hparams}, we record validation loss at the end of every epoch, and Figure~\ref{fig:convergence} shows the curves. The gap between the final and the best epoch is not informative here, because it is zero in nearly every run and is satisfied by construction, so we instead report how much improvement the last quarter of the budget still yields, which is between $0.01$ and $0.45$\%, against total improvements of $0.9$ to $11.9$\%. Every architecture has reached a plateau when the budget ends.

\begin{figure}[ht]
\centering
\includegraphics[width=\textwidth]{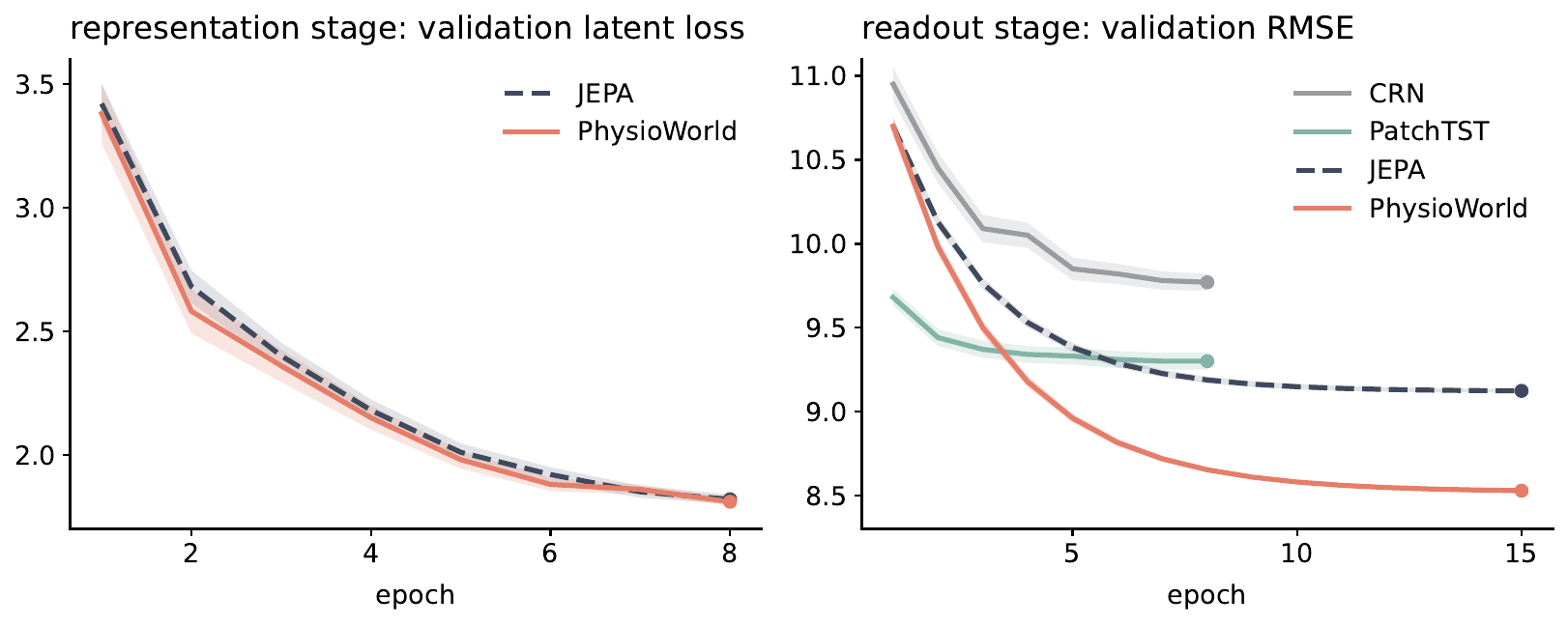}
\caption{Validation loss across epochs on MIMIC-IV circulatory, mean over five seeds with $\pm 1$ std shaded and the best epoch marked. CRN and PatchTST train in one stage; the JEPA family trains a representation against a latent loss and then a readout on the frozen representation, so its right-hand curve begins near its floor. Every curve flattens before the budget ends.}
\label{fig:convergence}
\end{figure}

\section{Proof of Proposition~\ref{prop:nonexpansive}} \label{app:proj}

\begin{proof}
Let $\mathcal{I}$ be the set of coordinates on which $\sigma_c$ is defined. Then $\mathcal{C} = \prod_{c} C_c$, where $C_c = \{t \in \mathbb{R} : \sigma_c t \ge 0\}$ for $c \in \mathcal{I}$ and $C_c = \mathbb{R}$ otherwise. Each $C_c$ is a closed convex cone, so $\mathcal{C}$ is a closed convex cone and is nonempty, since $0 \in \mathcal{C}$. Because $\mathcal{C}$ is a Cartesian product, the Euclidean projection onto it separates across coordinates, and on a half-line the nearest point to $t$ is $\sigma_c \max(\sigma_c t, 0)$; the map $\Pi$ of \eqref{eq:operator} is therefore the Euclidean projection onto $\mathcal{C}$.

Claims (i) and (ii) are then immediate. For (iii), each coordinate map is the identity, $\max(\cdot,0)$, or $\min(\cdot,0)$, and each is $1$-Lipschitz on $\mathbb{R}$; hence $|\Pi(y)_c - \Pi(y')_c| \le |y_c - y'_c|$ for every $c$, and summing squares gives the claim. For (iv), the projection theorem on a closed convex set gives $\langle y - \Pi(y),\, w - \Pi(y) \rangle \le 0$ for every $w \in \mathcal{C}$; taking $w = y^{\star}$ yields $\langle y - \Pi(y),\, \Pi(y) - y^{\star} \rangle \ge 0$. Decomposing $y - y^{\star} = \big(y - \Pi(y)\big) + \big(\Pi(y) - y^{\star}\big)$ and expanding,
\[
\|y - y^{\star}\|^{2} - \|\Pi(y) - y^{\star}\|^{2}
\;=\; \|y - \Pi(y)\|^{2}
+ 2\big\langle y - \Pi(y),\, \Pi(y) - y^{\star} \big\rangle
\;\ge\; \|y - \Pi(y)\|^{2}. \qedhere
\]
\end{proof}

\section{Significance against Clamp} \label{app:stats}

We compare \ourmethod\ with Clamp, the strongest method sharing our backbone, using two tests whose assumptions differ. The first is a patient-level paired bootstrap (Table~\ref{tab:stats}): for each axis and cohort, test patients are resampled with replacement $1{,}000$ times, both methods are scored on the same resample, and the $2.5$th and $97.5$th percentiles of the RMSE difference give a $95\%$ interval. Resampling patients rather than steps respects the dependence between steps of the same stay. The second is a one-sided Wilcoxon signed-rank test over the $56$ train/test slice pairs, $48$ from the five axes of MIMIC-IV circulatory and $8$ from the sedative axis of the other four cohorts, with seeds averaged within each pair. Slice pairs on the same axis share their training slices and are not independent, so the Wilcoxon $p$-value is approximate; the bootstrap makes no assumption across slice pairs and is the primary test.

\begin{table}[ht]
\caption{RMSE difference Clamp $-$ \ourmethod\ with patient-level $95\%$ bootstrap intervals; positive favors \ourmethod.}
\label{tab:stats}
\begin{center}
\small
\begin{tabular}{@{}llccc@{}}
\toprule
Cohort & Axis & $\Delta$RMSE & $95\%$ interval & Excludes $0$ \\
\midrule
MIMIC-IV circ.   & Age         & $0.14$ & $[0.09, 0.19]$  & yes \\
                 & Disease     & $0.14$ & $[0.08, 0.20]$  & yes \\
                 & Era         & $0.05$ & $[-0.01, 0.11]$ & no  \\
                 & Vasopressor & $0.23$ & $[0.12, 0.34]$  & yes \\
                 & Sedative    & $0.17$ & $[0.07, 0.27]$  & yes \\
\addlinespace[2pt]
MIMIC-IV injury  & Sedative    & $0.09$ & $[-0.02, 0.20]$ & no  \\
MIMIC-III circ.  & Sedative    & $0.08$ & $[-0.03, 0.19]$ & no  \\
MIMIC-III injury & Sedative    & $0.06$ & $[-0.08, 0.20]$ & no  \\
VitalDB          & Sedative    & $0.07$ & $[-0.06, 0.20]$ & no  \\
\bottomrule
\end{tabular}
\end{center}
\end{table}

Across the $56$ slice pairs, \ourmethod\ has the lower error on $48$, with a median difference of $0.11$, and the one-sided Wilcoxon test gives $p < 10^{-6}$.

\section{Rollout at every horizon} \label{app:rollout}

Table~\ref{tab:rollout_full} gives the autoregressive RMSE of Section~\ref{sec:rollout} at every horizon with its across-seed standard deviation, together with the dispersion at the first and last step.

\begin{table}[ht]
\caption{Free-running autoregressive rollout on MIMIC-IV after a half-stay warmup, mean and standard deviation over five seeds. Slope is the least-squares fit of $\log \mathrm{RMSE}$ on $\log h$ over $h \in \{1,2,4,8\}$; dispersion is $\mathrm{sd}_{\mathrm{pred}}/\mathrm{sd}_{\mathrm{truth}}$ at $h{=}1$ and $h{=}8$.}
\label{tab:rollout_full}
\begin{center}
\small
\setlength{\tabcolsep}{3pt}
\begin{tabularx}{\textwidth}{@{}l *{7}{Y}@{}}
\toprule
& \multicolumn{4}{c}{\textbf{Autoregressive RMSE ($\downarrow$)}}
& & \multicolumn{2}{c}{\textbf{Dispersion ($\to 1$)}} \\
\cmidrule(lr){2-5}
\cmidrule(l){7-8}
Method & $h{=}1$ & $h{=}2$ & $h{=}4$ & $h{=}8$ & Slope ($\downarrow$) & $h{=}1$ & $h{=}8$ \\
\midrule
\multicolumn{8}{@{}l}{\textcolor{tablegray}{\emph{Reference and data-driven forecasters}}} \\
Persistence
& $11.48{\scriptstyle \pm0.03}$ & $12.18{\scriptstyle \pm0.04}$ & $12.89{\scriptstyle \pm0.04}$ & $13.66{\scriptstyle \pm0.05}$
& $+0.084$ & $1.000$ & $0.985$ \\
PatchTST
& $8.89{\scriptstyle \pm0.02}$ & $9.68{\scriptstyle \pm0.03}$ & $10.58{\scriptstyle \pm0.03}$ & $11.58{\scriptstyle \pm0.04}$
& $+0.127$ & $0.884$ & $0.631$ \\
GraFITi
& $9.11{\scriptstyle \pm0.02}$ & $9.85{\scriptstyle \pm0.03}$ & $10.68{\scriptstyle \pm0.03}$ & $11.58{\scriptstyle \pm0.04}$
& $+0.116$ & $0.892$ & $0.700$ \\
Chronos-2
& $8.85{\scriptstyle \pm0.02}$ & $9.60{\scriptstyle \pm0.03}$ & $10.45{\scriptstyle \pm0.03}$ & $11.42{\scriptstyle \pm0.04}$
& $+0.124$ & $0.880$ & $0.645$ \\
CRN
& $9.85{\scriptstyle \pm0.03}$ & $10.58{\scriptstyle \pm0.03}$ & $11.42{\scriptstyle \pm0.04}$ & $12.31{\scriptstyle \pm0.05}$
& $+0.108$ & $0.871$ & $0.758$ \\
Causal Transformer
& $9.28{\scriptstyle \pm0.02}$ & $10.02{\scriptstyle \pm0.03}$ & $10.85{\scriptstyle \pm0.03}$ & $11.78{\scriptstyle \pm0.04}$
& $+0.115$ & $0.865$ & $0.663$ \\
\midrule
\multicolumn{8}{@{}l}{\textcolor{tablegray}{\emph{Robustness by reweighting or invariance}}} \\
IPTW
& $9.35{\scriptstyle \pm0.03}$ & $10.82{\scriptstyle \pm0.06}$ & $13.60{\scriptstyle \pm0.14}$ & $17.15{\scriptstyle \pm0.25}$
& $+0.296$ & $0.905$ & $1.825$ \\
IRM
& $9.39{\scriptstyle \pm0.03}$ & $12.50{\scriptstyle \pm0.05}$ & $17.80{\scriptstyle \pm0.12}$ & $24.85{\scriptstyle \pm0.22}$
& $+0.472$ & $0.912$ & $1.757$ \\
CORAL
& $10.21{\scriptstyle \pm0.03}$ & $11.15{\scriptstyle \pm0.04}$ & $11.92{\scriptstyle \pm0.04}$ & $12.65{\scriptstyle \pm0.05}$
& $+0.102$ & $0.612$ & $0.415$ \\
\midrule
\multicolumn{8}{@{}l}{\textcolor{tablegray}{\emph{World models}}} \\
JEPA
& $9.12{\scriptstyle \pm0.02}$ & $9.95{\scriptstyle \pm0.03}$ & $11.10{\scriptstyle \pm0.04}$ & $12.45{\scriptstyle \pm0.05}$
& $+0.150$ & $0.895$ & $0.748$ \\
Soft constraint
& $8.85{\scriptstyle \pm0.02}$ & $9.62{\scriptstyle \pm0.03}$ & $10.65{\scriptstyle \pm0.04}$ & $11.85{\scriptstyle \pm0.04}$
& $+0.141$ & $0.902$ & $0.765$ \\
Clamp
& $8.92{\scriptstyle \pm0.02}$ & $9.58{\scriptstyle \pm0.02}$ & $10.42{\scriptstyle \pm0.03}$ & $11.35{\scriptstyle \pm0.04}$
& $+0.116$ & $0.910$ & $0.792$ \\
\rowcolor{tableblueLight}
\textbf{\textcolor{tableblue}{\ourmethod}}
& $\mathbf{8.52{\scriptstyle \pm0.01}}$ & $\mathbf{9.04{\scriptstyle \pm0.02}}$ & $\mathbf{9.62{\scriptstyle \pm0.02}}$ & $\mathbf{10.28{\scriptstyle \pm0.03}}$
& $\mathbf{+0.090}$ & $\mathbf{0.935}$ & $\mathbf{0.842}$ \\
\bottomrule
\end{tabularx}
\end{center}
\end{table}

\end{document}